\documentclass[draftclsnofoot,onecolumn,12pt]{IEEEtran}

\usepackage{cite}
\usepackage{amsmath,amssymb,amsfonts}
\usepackage{algorithm}
\usepackage{algorithmic}
\usepackage{graphicx}
\usepackage{bm}
\usepackage{cleveref}

\usepackage{tikz}
\usetikzlibrary{arrows.meta,positioning,calc,fit,decorations.pathreplacing}
\newtheorem{proposition}{Proposition}
\newtheorem{remark}{Remark}
\newtheorem{corollary}{Corollary}

\begin{document}

\title{Hybrid Beamforming via Programmable Unitary RF Networks}

\author{Nikola Zlatanov and Damir Salakhov
\thanks{N.~Zlatanov and D.~Salakhov are with Innopolis University, Innopolis, 420500, Russia (e-mails: \texttt{n.zlatanov@innopolis.ru} and \texttt{d.salakhov@innopolis.ru}).}
}

\maketitle

\begin{abstract}
Conventional hybrid beamforming architectures are often compared with one another and with the fully-digital architecture under the same \emph{radiated} antenna power. However, the physically relevant budget is the power injected by the RF-chain outputs into the passive analog RF network, which is then usually transferred to the antenna ports in a contractive (lossy) manner. This issue is especially pronounced for fully-connected splitter--phase-shifter--combiner networks, whose physical power transfer remains contractive even under ideal passive-component assumptions. Motivated by this injected-power viewpoint, this paper proposes a hybrid beamforming architecture based on a programmable unitary RF network. Under ideal passive-component assumptions, all injected RF-chain power reaches the antenna ports without loss.

The analog RF network is realized as an \emph{interlaced mixer--phase} architecture consisting of fixed (non-programmable) mixing layers interleaved with programmable diagonal phase-shifting layers. We derive a closed-form digital beamformer and a low-complexity programming method for the analog beamformer, yielding a hybrid precoder that closely matches the fully-digital precoder. When the programmed analog subspace contains the target subspace, the resulting hybrid precoder exactly reproduces the fully-digital precoder under the same injected-power constraint. The depth analysis provides a stream-aware necessary guideline for when such containment becomes plausible, and the numerical results show that near-exact recovery is reached once the depth is sufficiently large. Narrowband simulations with continuous and quantized phases, benchmarked against the fully-digital architecture, the physically modeled fully-connected phase-shifter baselines, and an ideal-lossless Butler/DFT beam-selection baseline under equal total injected RF-chain power, show that the continuous-phase and 6-bit realizations of the proposed architecture are nearly indistinguishable from the fully-digital benchmark and achieve significant gains over the baseline hybrid beamforming architectures.
\end{abstract}

\begin{IEEEkeywords}
Hybrid beamforming, unitary RF networks, interlaced mixer--phase architectures, phase quantization, Grassmann manifold.
\end{IEEEkeywords}

\section{Introduction}
\label{sec:intro}
\IEEEPARstart{M}{assive} multiple-input multiple-output (MIMO) systems rely on linear precoding to suppress multiuser interference and to exploit array gain. Fully-digital precoding is the natural baseline; however, it requires one radio-frequency (RF) chain per antenna, which is often infeasible in large arrays due to power, cost, and routing constraints \cite{heath2016overview}. Hybrid analog--digital beamforming architectures address this bottleneck by combining a high-dimensional analog network with a low-dimensional digital stage. Typically, these architectures comprise $r$ RF-chains and $N$ antennas, with $r\ll N$, and are able to support $S\le r$ data streams.

Most hybrid-beamforming papers describe the analog stage through an abstract matrix $\mathbf{F}_{\mathrm{RF}}$ and then compare architectures after normalizing the composite precoder $\mathbf{F}_{\mathrm{RF}}\mathbf{F}_{\mathrm{BB}}$ to the same radiated power, where $\mathbf{F}_{\mathrm{BB}}$ is the digital precoder. This is appropriate when the goal is to compare hybrid-beamforming architectures at equal radiated power from their antennas. However, such a comparison does not reveal how much of the physically relevant power, which is the power injected by the RF-chains, actually reaches the antenna ports.  Once the analog stage is interpreted as a physical analog RF network between the RF-chain outputs and the antenna ports, it becomes clear that not all power injected by the   RF-chains into the analog RF network reaches the antenna ports, even under ideal passive-component assumptions.  As a result, a more physically relevant comparison between beamforming architectures is  under a fixed total injected-power budget by the RF-chains.

This issue has already been highlighted in realistic RF analyses of hybrid precoders \cite{heath2016overview,garcia2016rf}, which showed that fully-connected splitter--phase-shifter--combiner networks can be strongly contractive even under ideal passive-component assumptions. As a result, analog RF networks that conduct unitary (lossless) power transfer have been proposed, such as the  Butler/DFT beamforming networks \cite{garcia2016rf} and  the lens- and beamspace-based front-ends \cite{heath2016overview}, at the cost of substantially lower abilities for   analog transforms. 

Among the common fully-connected phase-shifter hybrid beamforming architectures, used also as baselines in this paper, this issue is particularly visible  in \cite{sohrabi2016hybrid,yu2019dps}. When these architectures are interpreted through their physical splitter--phase-shifter--combiner implementations, the resulting transfer from the RF-chain outputs to the antenna ports is contractive rather than semi-unitary. Consequently, under equal injected power, only a fraction of the RF-chain power reaches the antennas, while the remaining power is absorbed in the combining/termination circuitry or is otherwise unavailable for radiation. This is precisely the effect that equal-radiated-power normalization obscures.

From the viewpoint of RF transport efficiency, the maximal transfer of power occurs  when the underlying $N\times N$ analog RF network is unitary. In that case, the induced $N\times r$ transfer of power from the $r$ driven RF-chain inputs to the $N$ antenna ports is semi-unitary and thereby the analog beamformer transfers the injected power exactly. However, an analog beamformer that conducts semi-unitary transfer of power is only one of the two important aspects. The second important aspect is that the analog beamformer must have full-expressiveness, which means that it can be programmed to express any $r$-rank semi-unitary matrix.

This paper proposes and studies a hybrid beamforming architecture whose analog beamformer conducts power-preserving RF transport from the RF-chains to the antenna ports and is programmed via its phase-shifter settings to realize the desired analog subspace. We derive a closed-form digital beamformer and a low-complexity programming method for the analog beamformer, which result in a hybrid beamforming precoder that closely matches the fully-digital precoder.  The numerical results then show that near-exact recovery is reached once the depth is sufficiently large in the considered setting. 

Programmable universal unitary analog processors offer a natural way to realize such power-preserving analog stages. Universal unitary analog processors based on interferometric meshes are well known in photonics \cite{reck1994unitary,clements2016optimal}. More recently, \emph{interlaced mixer--phase} architectures---fixed mixing layers interleaved with programmable diagonal phase layers---have emerged as a robust route to programmable unitary transforms in both photonic and microwave settings \cite{keshavarz2025programmable,alvarez2025universality}. This paper leverages such universal unitary analog processors to propose and study a new hybrid beamforming architecture.

Recent theory on layered programmable unitary decompositions has established universality criteria for square factorizations of the form $\mathbf{U}=\mathbf{D}_1\mathbf{V}_1\mathbf{D}_2\mathbf{V}_2\cdots \mathbf{V}_{M-1}\mathbf{D}_M$, and has identified dense mixers such as DFT/complex-Hadamard transforms as especially well-conditioned choices for universal control \cite{alvarez2025universality}. Our architecture can be viewed as a repeated-mixer specialization of this broader class; however, the present work addresses a different regime, namely hybrid subspace synthesis with $S\le r\ll N$, for which full $N\times N$ universality is not required and substantially shallower depths are sufficient.

Specifically, we use the interlaced mixer--phase architecture of \cite{keshavarz2025programmable} as the underlying $N\times N$ analog RF processor. It consists of $M$ programmable diagonal phase-shifting layers interlaced with $M+1$ fixed mixer layers. For an $N$-antenna transmitter with $r$ RF-chains, the first $r$ inputs of this analog processor are driven by the RF-chains, the remaining $N-r$ inputs are terminated in matched loads, and the $N$ outputs feed the antenna ports. Because the full analog processor is unitary, the induced analog beamformer from the $r$ driven inputs to the $N$ antenna ports is semi-unitary and therefore preserves the injected power under the ideal matched hardware model.

The hybrid beamforming architecture we propose in this paper is related to two earlier low-loss hybrid beamforming families, but differs from both in analog expressivity. First, sub-connected hybrid precoders connect each RF-chain only to its own antenna subset through a block-diagonal constant-modulus analog matrix \cite{gao2016ee}. Under standard power normalization, such a structure is non-contractive because its analog columns have disjoint support and unit norm; however, its realizable transmit subspaces are restricted by the fixed partition of the array and do not allow global mixing across subarrays. Second, Butler/DFT-based hybrid architectures realize a fixed orthogonal beamspace transform in the RF domain \cite{garcia2016rf}. In the ideal-lossless case, the RF stage reduces to a selected submatrix of a DFT transform and is therefore also non-contractive, but its analog flexibility is limited to beam selection within a fixed Fourier basis. In contrast, the proposed architecture retains the power-transport advantage of a unitary RF network while making the realizable $r$-dimensional analog subspace programmable through the diagonal phase layers.

Very recently, MiLAC-based beamforming has been proposed using fully connected tunable-admittance microwave networks, and a hybrid digital--MiLAC factorization was shown, at the matrix-factorization level, to achieve digital-beamforming flexibility with $K$ RF-chains \cite{wu2026milac}. Our work is close in spirit because it also pursues reconfigurable lossless RF processing beyond conventional phase-shifter networks, but it addresses a different architectural point: instead of a dense tunable-admittance network, we study a structured  unitary cascade built from fixed mixers and programmable diagonal phase-shifting layers, which leads naturally to explicit analog programming, depth-scaling insight, and quantized-phase analysis.

Our contributions are as follows:
\begin{enumerate}
\item We propose a hybrid beamforming architecture in which the analog RF network is built using the interlaced mixer--phase architecture of \cite{keshavarz2025programmable}. The full $N\times N$ processor is unitary, and the induced $N\times r$ analog beamformer from the $r$ RF-chains to the $N$ antenna ports is semi-unitary. Consequently, under the ideal matched hardware model, the analog beamformer preserves the injected power exactly.

  \item We derive a closed-form digital beamformer and a low-complexity programming method for the analog beamformer, which result in a hybrid precoder that closely matches the fully-digital precoder. Exact hybrid recovery follows whenever the programmed analog subspace contains the target subspace, and we provide a stream-aware necessary depth guideline for this regime.

\item We evaluate the proposed architecture against a fully-digital benchmark, against two physically modeled fully-connected hybrid beamforming baselines, and against an ideal-lossless Butler/DFT beam-selection baseline, all under the same injected power in the narrowband case. The simulations show that the continuous-phase and 6-bit realizations of the proposed architecture are nearly indistinguishable from the fully-digital benchmark and yield significant performance gains over the baseline hybrid-beamforming architectures.
\end{enumerate}

This paper is intended as a proof-of-concept and architectural study of  hybrid beamforming with programmable unitary RF networks.
Accordingly, aside from the explicit phase-quantization constraint studied in Section~\ref{sec:quantized}, we ignore hardware non-idealities in the proposed architecture and in all comparator baselines.
Therefore, the reported data rates should be interpreted as \emph{idealized upper bounds} under perfect hardware and perfect calibration. 
Moreover, this paper focuses on the \emph{narrowband} case and leaves the wideband case for future work. In wideband systems, frequency-dependent analog processing would typically require true-time delays or other broadband phase-control mechanisms, which is a natural follow-up direction.

This paper is organized as follows. Section~\ref{sec:system} introduces the narrowband system model, the injected-power viewpoint, the unitary RF-network architecture, the target-matching formulation, and the adjoint phase-programming rule. Section~\ref{sec:quantized} presents phase-quantized programming. Section~\ref{sec:depth} presents the stream-aware depth guideline. Section~\ref{sec:baselines} describes the comparator baselines under equal injected power. Section~\ref{sec:numerics} provides numerical results, and Section~\ref{sec:conclusion} concludes the paper.

\section{Problem Formulation and Hardware Model}
\label{sec:system}

\subsection{Hybrid beamforming and injected power}
We consider a narrowband downlink transmitter with $N$ antennas, $r$ RF-chains, transmitting $S$ data streams, where $S\le r\ll N$. The hybrid precoder is
\begin{equation}
\label{eq:hyb}
\mathbf{F}=\mathbf{F}_{\mathrm{RF}}\,\mathbf{F}_{\mathrm{BB}},
\end{equation}
where $\mathbf{F}_{\mathrm{RF}}\in\mathbb{C}^{N\times r}$ is the analog beamformer matrix and $\mathbf{F}_{\mathrm{BB}}\in\mathbb{C}^{r\times S}$ is the digital beamformer matrix. Let $\mathbf{s}\in\mathbb{C}^{S}$ denote the information vector with $\mathbb{E}[\mathbf{s}\mathbf{s}^H]=\mathbf{I}_S.$
The RF-chain output vector is
\begin{equation}
\label{eq_tr_x}
\mathbf{x}=\mathbf{F}_{\mathrm{BB}}\mathbf{s}\in\mathbb{C}^{r},
\end{equation}
and the antenna excitation vector is
\begin{equation}
\label{eq_tr_x_ant}
\mathbf{x}_{\mathrm{ant}}=\mathbf{F}_{\mathrm{RF}}\mathbf{x}
=\mathbf{F}_{\mathrm{RF}}\mathbf{F}_{\mathrm{BB}}\mathbf{s}.
\end{equation}

Throughout the paper, architectural comparisons are performed at equal \emph{injected power}
\begin{equation}
\label{eq:P_inj}
P_{\mathrm{inj}}
\triangleq
\mathbb{E}\|\mathbf{x}\|_2^2
=
\|\mathbf{F}_{\mathrm{BB}}\|_F^2.
\end{equation}
Assuming ideal matched antennas, the total radiated power equals the antenna-port power,
\begin{equation}
\label{eq:P_rad}
P_{\mathrm{rad}}
\triangleq
\mathbb{E}\|\mathbf{x}_{\mathrm{ant}}\|_2^2
=
\|\mathbf{F}_{\mathrm{RF}}\mathbf{F}_{\mathrm{BB}}\|_F^2.
\end{equation}
For a general passive RF network one has $P_{\mathrm{rad}}\le P_{\mathrm{inj}}$. The proposed architecture is special because, under the ideal unitary model introduced next, it satisfies $P_{\mathrm{rad}}=P_{\mathrm{inj}}$ exactly.

We consider a narrowband multiuser downlink transmission to $S$ single-antenna users. Let $\mathbf{H}$ be the channel matrix, given by
$\mathbf{H}=[\mathbf{h}_1,\ldots,\mathbf{h}_S]\in\mathbb{C}^{N\times S}$,
where $\mathbf{h}_s$ is the channel to user $s$. Let $y_s$ be the received signal at user $s$, and let $\mathbf{y}$ denote the stacked received-signal vector. Then the received signal vector $\mathbf{y}$ is given by
\begin{equation}
\mathbf{y}
=
\mathbf{H}^H \mathbf{F}_{\mathrm{RF}}\mathbf{F}_{\mathrm{BB}}\mathbf{s}
+\mathbf{n},
\end{equation}
with $\mathbf{n}\sim\mathcal{CN}(\mathbf{0},\sigma^2\mathbf{I}_S)$.

\begin{remark}[Interpretation of injected power and PA placement]
In this work, $P_{\mathrm{inj}}=\mathbb{E}\|\mathbf{x}\|_2^2$ models the total RF power delivered by the $r$ RF-chains \emph{after} their power amplifiers (PAs) into the input ports of the passive analog RF network.
Equivalently, we assume that the analog RF network is a passive, lossless (in the ideal model) multiport driven by $r$ high-power sources, while the remaining $N-r$ input ports are terminated in matched loads.
This setting captures architectures in which PAs are placed per RF-chain (centralized amplification).
If instead amplification is distributed after the analog network (e.g., per-antenna PAs), then $P_{\mathrm{inj}}$ should be redefined at the PA outputs, and the equal-injected-power comparison in this paper would require a corresponding adjustment.
\end{remark}

\subsection{Lossless $N$-mode analog processor and excited inputs}
\label{subsec:hardware_clarify}
We model the programmable analog beamformer as a lossless unitary $N$-mode linear RF processor acting on complex-envelope amplitudes, denoted by $\mathbf{X}(\Phi)\in\mathbb{C}^{N\times N}$, where $\Phi$ collects the programmable phase settings. Since the analog RF processor is unitary, it satisfies 
\begin{equation}
\mathbf{X}(\Phi)^H\mathbf{X}(\Phi)=\mathbf{I}_N.
\end{equation}
Now let $\mathbf{u}_{\mathrm{in}}\in\mathbb{C}^{N\times 1}$ and $\mathbf{u}_{\mathrm{out}}\in\mathbb{C}^{N\times 1}$ denote the complex amplitudes at the $N$ input ports and the $N$ output ports of the analog RF processor, respectively. Then the input--output relation is
\begin{equation}
\label{eq:io_modes}
\mathbf{u}_{\mathrm{out}}=\mathbf{X}(\Phi)\,\mathbf{u}_{\mathrm{in}}.
\end{equation}
This is the ideal matched, lossless hardware model used in the paper. In an RF interpretation, $\mathbf{X}(\Phi)$ is the normalized transmission block of an $N$-port lossless network between the driven ports and the antenna ports. Moreover, note that $\mathbf{u}_{\mathrm{out}}$ is also the antenna-excitation vector.

Since there are only $r$ RF-chains, only $r$ out of the $N$ input modes are driven by RF-chains, and the remaining $N-r$ inputs are terminated in matched loads, i.e., they have zero incident waves. Let
$\mathbf{E}_r=[\mathbf{e}_1,\ldots,\mathbf{e}_r]\in\mathbb{R}^{N\times r}$
embed the RF-chain vector $\mathbf{x}\in\mathbb{C}^{r}$ into $\mathbb{C}^{N}$ via
\begin{equation}
\mathbf{u}_{\mathrm{in}}=\mathbf{E}_r\mathbf{x}.
\end{equation}
The antenna excitation is then
\begin{equation}
\mathbf{u}_{\mathrm{out}}=\mathbf{X}(\Phi)\mathbf{E}_r\,\mathbf{x},
\end{equation}
so the analog beamformer is given by
\begin{equation}
\label{eq:FRF}
\mathbf{F}_{\mathrm{RF}}(\Phi)=\mathbf{X}(\Phi)\mathbf{E}_r \in\mathbb{C}^{N\times r}.
\end{equation}
Because $\mathbf{X}(\Phi)$ is unitary, the analog beamformer is semi-unitary:
\begin{equation}
\label{eq:semi_unitary}
\mathbf{F}_{\mathrm{RF}}(\Phi)^H\mathbf{F}_{\mathrm{RF}}(\Phi)=\mathbf{I}_r.
\end{equation}

\begin{remark}[Notation convention]
The analog beamformer is completely determined by the phase configuration $\Phi$ through \eqref{eq:FRF}. Accordingly, after \eqref{eq:FRF} has been introduced we use $\mathbf{F}_{\mathrm{RF}}$ and $\mathbf{F}_{\mathrm{RF}}(\Phi)$ interchangeably whenever the dependence on $\Phi$ is implicit. We write $\mathbf{F}_{\mathrm{RF}}(\Phi)$ explicitly only when the optimization variable is the phase setting itself or when multiple phase configurations are being compared.
\end{remark}

\begin{proposition}[Power preservation of the proposed analog stage]
\label{prop:power_preserve}
For any $\mathbf{x}\in\mathbb{C}^{r}$, the analog beamformer in \eqref{eq:FRF} preserves Euclidean norm:
\begin{equation}
\|\mathbf{F}_{\mathrm{RF}}(\Phi)\mathbf{x}\|_2^2=\|\mathbf{x}\|_2^2.
\end{equation}
Consequently,
\begin{equation}
P_{\mathrm{rad}}=P_{\mathrm{inj}}
\end{equation}
for the proposed architecture under the ideal unitary model.
\end{proposition}

\begin{IEEEproof}
Using \eqref{eq:semi_unitary},
\[
\|\mathbf{F}_{\mathrm{RF}}\mathbf{x}\|_2^2
=
\mathbf{x}^H\mathbf{F}_{\mathrm{RF}}^H\mathbf{F}_{\mathrm{RF}}\mathbf{x}
=
\mathbf{x}^H\mathbf{x}
=
\|\mathbf{x}\|_2^2.
\]
Taking expectation over $\mathbf{x}=\mathbf{F}_{\mathrm{BB}}\mathbf{s}$ gives
$P_{\mathrm{rad}}=P_{\mathrm{inj}}$.
\end{IEEEproof}

\begin{remark}[Connection to lossless multiport networks]
Under the standard power-wave (scattering) description of an $N$-port lossless network with matched terminations,
the overall scattering matrix is unitary.
In the ideal matched model used here, $\mathbf{X}(\Phi)$ can be interpreted as the effective transmission operator between the excited input waves and the antenna-port waves, normalized such that $\|\mathbf{u}\|_2^2$ equals total power.
This justifies the unitary constraint $\mathbf{X}(\Phi)^H\mathbf{X}(\Phi)=\mathbf{I}_N$ as an idealization of a lossless, reflection-free network.
\end{remark}

\subsection{Realizing $\mathbf{X}(\Phi)$ via the interlaced mixer--phase architecture}
\label{subsec:realazing_X}
We realize $\mathbf{X}(\Phi)$ as an interlaced product of a fixed unitary mixing layer $\mathbf{W}\in\mathbb{C}^{N\times N}$ and $M$ programmable diagonal phase-shifting layers, as
\begin{equation}
\label{eq:X}
\mathbf{X}(\Phi)=\mathbf{W}\mathbf{D}_M\mathbf{W}\mathbf{D}_{M-1}\cdots \mathbf{W}\mathbf{D}_1\mathbf{W},
\end{equation}
where
\begin{equation}
\mathbf{D}_k(\bm{\phi}_k)=\mathrm{diag}\big(e^{j\phi_{1,k}},\ldots,e^{j\phi_{N,k}}\big)
\end{equation}
and $\Phi=\{\phi_{n,k}\}$ collects all phases and represents the programmable phase setting. The fixed layer $\mathbf{W}$ models a static coupler/splitter network or a fixed mixing medium and is assumed to be sufficiently mixing, consistent with \cite{keshavarz2025programmable,alvarez2025universality}. Because each factor in \eqref{eq:X} is unitary, the overall processor $\mathbf{X}(\Phi)$ remains unitary for every phase configuration.

\subsection{End-to-end transmitter architecture}
As explained in \Cref{subsec:hardware_clarify}, 
the analog beamformer $\mathbf{F}_{\mathrm{RF}}$ is realized via the unitary analog RF processor modeled via the matrix  $\mathbf{X}(\Phi)$, when only $r$ out of the $N$ input modes are driven by RF-chains, and the remaining $N-r$ inputs are terminated in matched loads. Next, as explained in \Cref{subsec:realazing_X},  the unitary analog RF processor, $\mathbf{X}(\Phi)$, is realized as a cascade of unitary mixing layers interlaced by programmable diagonal phase-shifting layers, as in  \eqref{eq:X}. Combining all of this, along with the digital beamformer and the $N$ antennas, a  realizable hardware model of the hybrid beamforming architecture is shown in \Cref{fig:system_diagram}.
\begin{figure}[t]
  \centering
  \includegraphics[width=\columnwidth]{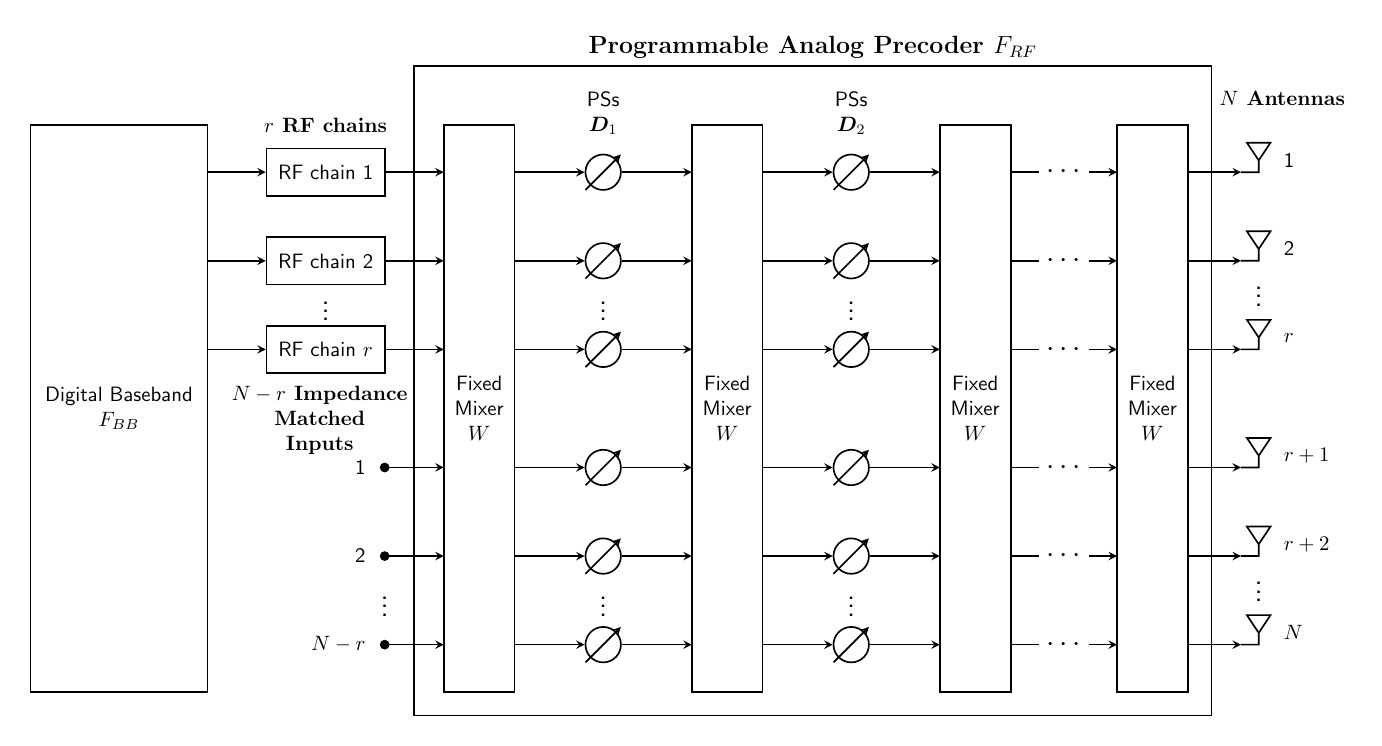}
   \caption{System diagram of the proposed hybrid transmitter with an interlaced programmable unitary RF network. The analog stage is the first $r$ columns of an $N\times N$ programmable unitary processor.}
   \label{fig:system_diagram}
 \end{figure}
 
As can be seen from \Cref{fig:system_diagram}, the digital beamformer $\mathbf{F}_{\mathrm{BB}}$ produces the vector $\mathbf{x}$ in \eqref{eq_tr_x}. The entries of $\mathbf{x}$ are upconverted by the $r$ RF-chains and injected into the first $r$ inputs of the programmable analog beamformer $\mathbf{F}_{\mathrm{RF}}(\Phi)$, while the remaining $N-r$ inputs are terminated in matched loads. The phase settings $\Phi$ program the analog beamformer so that the resulting antenna excitation
\[
\mathbf{x}_{\mathrm{ant}}=\mathbf{F}_{\mathrm{RF}}(\Phi)\mathbf{x}
\]
approximates the desired fully-digital excitation vector.

\subsection{Problem Formulation}
Let $\mathbf{F}_{\mathrm{tar}}\in\mathbb{C}^{N\times S}$ denote a target matrix whose columns span the desired $S$-dimensional transmit subspace, e.g., an orthonormal basis extracted from a fully-digital precoder or from a channel factorization. We approximate $\mathbf{F}_{\mathrm{tar}}$ via the composite hybrid beamformer $\mathbf{F}_{\mathrm{RF}}\mathbf{F}_{\mathrm{BB}}$ by solving the following optimization problem
\begin{equation}
\label{eq:prob}
\min_{\mathbf{F}_{\mathrm{RF}},\,\mathbf{F}_{\mathrm{BB}}}\ \big\|\mathbf{F}_{\mathrm{tar}}-\mathbf{F}_{\mathrm{RF}} \mathbf{F}_{\mathrm{BB}}\big\|_F^2,
\end{equation}
subject to the analog realizability constraints induced by \eqref{eq:FRF}--\eqref{eq:X}. Due to these constraints, the optimization of $\mathbf{F}_{\mathrm{RF}}$ is done only via the optimization of the phase-shifters of the diagonal  matrices in \eqref{eq:X}. The optimization of $\mathbf{F}_{\mathrm{BB}}$ is unconstrained.

\subsection{Closed-form digital beamformer and reduced objective}

For fixed $\Phi$, \eqref{eq:prob} is least squares in $\mathbf{F}_{\mathrm{BB}}$ with minimizer
$\mathbf{F}_{\mathrm{BB}}^{\star}=(\mathbf{F}_{\mathrm{RF}}^H\mathbf{F}_{\mathrm{RF}})^{-1}\mathbf{F}_{\mathrm{RF}}^H\mathbf{F}_{\mathrm{tar}}$.
Because the proposed analog beamformer is semi-unitary, $\mathbf{F}_{\mathrm{RF}}^H\mathbf{F}_{\mathrm{RF}}=\mathbf{I}_r$ by \eqref{eq:semi_unitary}, the solution of \eqref{eq:prob} simplifies to
\begin{equation}
\label{eq:bb_closed}
\mathbf{F}_{\mathrm{BB}}^{\star}=\mathbf{F}_{\mathrm{RF}}(\Phi)^H\,\mathbf{F}_{\mathrm{tar}}.
\end{equation}
Substituting \eqref{eq:bb_closed} into \eqref{eq:hyb} yields
\begin{equation}
\label{eq:proj_form}
\mathbf{F}(\Phi)
=
\mathbf{F}_{\mathrm{RF}}(\Phi)\mathbf{F}_{\mathrm{RF}}(\Phi)^H\mathbf{F}_{\mathrm{tar}}
=
\bm{\Pi}(\Phi)\mathbf{F}_{\mathrm{tar}},
\end{equation}
where
\begin{equation}
\bm{\Pi}(\Phi)\triangleq \mathbf{F}_{\mathrm{RF}}(\Phi)\mathbf{F}_{\mathrm{RF}}(\Phi)^H
\end{equation}
is the projector onto the realizable analog subspace
$\mathcal{V}_r(\Phi)=\mathrm{range}(\mathbf{F}_{\mathrm{RF}}(\Phi))$.

Moreover, inserting \eqref{eq:bb_closed} into the main optimization problem \eqref{eq:prob}  yields
\begin{align}
\label{eq:ls_reduced_error}
\min_{\mathbf{F}_{\mathrm{BB}}, \mathbf{F}_{\mathrm{RF}}(\Phi)} &\big\|\mathbf{F}_{\mathrm{tar}}-\mathbf{F}_{\mathrm{RF}}(\Phi)\mathbf{F}_{\mathrm{BB}}\big\|_F^2 \nonumber\\
&=
\big\|\mathbf{F}_{\mathrm{tar}}\big\|_F^2
-
\max_{\mathbf{F}_{\mathrm{RF}}(\Phi)}
\big\|\mathbf{F}_{\mathrm{tar}}^{H}\mathbf{F}_{\mathrm{RF}}(\Phi)\big\|_F^2.
\end{align}
Hence, the programming of the analog beamformer, via the phase-shifters of the diagonal  matrices in \eqref{eq:X}, such that the objective in \eqref{eq:prob} is minimized reduces to
\begin{equation}
\label{eq:obj_norm}
\Phi^\star=\arg\max_{\Phi}\;\|\mathbf{F}_{\mathrm{tar}}^{H}\mathbf{F}_{\mathrm{RF}}(\Phi)\|_F^2.
\end{equation}
The reduced objective \eqref{eq:obj_norm} is valid for any matrix $\mathbf{F}_{\mathrm{tar}}\in\mathbb{C}^{N\times S}$. When $\mathbf{F}_{\mathrm{tar}}$ has orthonormal columns, the objective depends only on the target subspace and admits a principal-angle and chordal-distance interpretation. When $\mathbf{F}_{\mathrm{tar}}$ is not column-orthonormal, \eqref{eq:obj_norm} should be interpreted as matching an energy-weighted target matrix rather than a pure subspace.

\begin{corollary}[Exact recovery under subspace containment]
\label{cor:exact_recovery}
Assume $S\le r$ and
\begin{equation}
\mathrm{range}(\mathbf{F}_{\mathrm{tar}})\subseteq \mathrm{range}(\mathbf{F}_{\mathrm{RF}}(\Phi)).
\end{equation}
Then the digital matrix in \eqref{eq:bb_closed} satisfies
\begin{equation}
\mathbf{F}_{\mathrm{RF}}(\Phi)\mathbf{F}_{\mathrm{BB}}^\star=\mathbf{F}_{\mathrm{tar}}
\end{equation}
and
\begin{equation}
\|\mathbf{F}_{\mathrm{BB}}^\star\|_F^2=\|\mathbf{F}_{\mathrm{tar}}\|_F^2.
\end{equation}
Thus, exact hybrid recovery is achieved with the same injected power as the target fully-digital precoder.
\end{corollary}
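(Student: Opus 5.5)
The plan is to combine the projector form \eqref{eq:proj_form} with the semi-unitarity relation \eqref{eq:semi_unitary}. Both claims follow from the fact that $\bm{\Pi}(\Phi)=\mathbf{F}_{\mathrm{RF}}(\Phi)\mathbf{F}_{\mathrm{RF}}(\Phi)^H$ is the orthogonal projector onto $\mathcal{V}_r(\Phi)=\mathrm{range}(\mathbf{F}_{\mathrm{RF}}(\Phi))$. This projector acts as the identity on any vector already lying in that subspace.

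\textbf{Step 1: exact reproduction.} By the containment hypothesis, every column of $\mathbf{F}_{\mathrm{tar}}$ lies in $\mathrm{range}(\mathbf{F}_{\mathrm{RF}}(\Phi))$. Hence there exists a matrix $\mathbf{A}\in\mathbb{C}^{r\times S}$ with $\mathbf{F}_{\mathrm{tar}}=\mathbf{F}_{\mathrm{RF}}(\Phi)\mathbf{A}$. Substituting this into \eqref{eq:bb_closed} and using \eqref{eq:semi_unitary} gives
\[
\mathbf{F}_{\mathrm{BB}}^\star=\mathbf{F}_{\mathrm{RF}}^H\mathbf{F}_{\mathrm{RF}}\mathbf{A}=\mathbf{A}.
\]
It follows that $\mathbf{F}_{\mathrm{RF}}\mathbf{F}_{\mathrm{BB}}^\star=\mathbf{F}_{\mathrm{RF}}\mathbf{A}=\mathbf{F}_{\mathrm{tar}}$. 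Equivalently, in the language of \eqref{eq:proj_form}, $\bm{\Pi}(\Phi)\mathbf{F}_{\mathrm{tar}}=\mathbf{F}_{\mathrm{tar}}$, so the residual in \eqref{eq:ls_reduced_error} is zero.

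\textbf{Step 2: power equality.} We apply Proposition~\ref{prop:power_preserve} column by column: $\mathbf{F}_{\mathrm{RF}}(\Phi)$ preserves the Euclidean norm of each column of $\mathbf{F}_{\mathrm{BB}}^\star$. Summing over the $S$ columns gives
\[
\|\mathbf{F}_{\mathrm{BB}}^\star\|_F^2=\|\mathbf{F}_{\mathrm{RF}}\mathbf{F}_{\mathrm{BB}}^\star\|_F^2=\|\mathbf{F}_{\mathrm{tar}}\|_F^2.
\]
The middle expression is exactly the quantity identified in Step 1. An alternative check is
\[
\|\mathbf{F}_{\mathrm{BB}}^\star\|_F^2=\mathrm{tr}\big(\mathbf{F}_{\mathrm{tar}}^H\bm{\Pi}\mathbf{F}_{\mathrm{tar}}\big)=\mathrm{tr}\big(\mathbf{F}_{\mathrm{tar}}^H\mathbf{F}_{\mathrm{tar}}\big),
\]
which uses $\bm{\Pi}^2=\bm{\Pi}=\bm{\Pi}^H$. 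By \eqref{eq:P_inj}, the injected power is $P_{\mathrm{inj}}=\|\mathbf{F}_{\mathrm{BB}}^\star\|_F^2$. For the fully-digital target, the injected power equals $\|\mathbf{F}_{\mathrm{tar}}\|_F^2$, since a fully-digital architecture injects directly at the antennas. This yields the final sentence of the statement.

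The argument is short, and no step is a real obstacle. The only point needing care is justifying the existence of the factor $\mathbf{A}$, that is, turning the range containment into a matrix factorization. The role of $S\le r$ should also be noted. This condition is necessary for the containment to be possible, since a subspace of dimension $\operatorname{rank}\mathbf{F}_{\mathrm{tar}}\le S$ cannot fit inside the $r$-dimensional $\mathcal{V}_r(\Phi)$ otherwise. It is not otherwise used in the proof.
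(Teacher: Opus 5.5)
Your proof is correct and follows essentially the same route as the paper. The paper also uses that $\bm{\Pi}(\Phi)$ acts as the identity on $\mathrm{range}(\mathbf{F}_{\mathrm{tar}})$, and then gets the power identity from $\|\mathbf{F}_{\mathrm{RF}}^H\mathbf{F}_{\mathrm{tar}}\|_F^2=\mathrm{tr}(\mathbf{F}_{\mathrm{tar}}^H\bm{\Pi}\mathbf{F}_{\mathrm{tar}})=\|\mathbf{F}_{\mathrm{tar}}\|_F^2$, which is your ``alternative check''; your main route through Proposition~\ref{prop:power_preserve} is an equivalent repackaging of semi-unitarity. One small imprecision in your closing remark: containment only requires $\operatorname{rank}\mathbf{F}_{\mathrm{tar}}\le r$, so $S\le r$ is necessary only when $\mathbf{F}_{\mathrm{tar}}$ has full column rank $S$, though as you say it plays no role in the argument.
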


\begin{IEEEproof}
The containment condition implies that $\bm{\Pi}(\Phi)$ acts as the identity on
$\mathrm{range}(\mathbf{F}_{\mathrm{tar}})$; hence \eqref{eq:proj_form} gives
$\mathbf{F}_{\mathrm{RF}}\mathbf{F}_{\mathrm{BB}}^\star=\mathbf{F}_{\mathrm{tar}}$.
Also,
\[
\|\mathbf{F}_{\mathrm{BB}}^\star\|_F^2
=
\|\mathbf{F}_{\mathrm{RF}}^H\mathbf{F}_{\mathrm{tar}}\|_F^2
=
\mathrm{tr}\!\left(\mathbf{F}_{\mathrm{tar}}^H \mathbf{F}_{\mathrm{RF}}\mathbf{F}_{\mathrm{RF}}^H \mathbf{F}_{\mathrm{tar}}\right)
=
\|\mathbf{F}_{\mathrm{tar}}\|_F^2.
\]
\end{IEEEproof}

\subsection{Phase programming via adjoint gradients}
\label{subsec:adjoint}
Having derived the optimal digital beamformer, we are left to determine $\Phi^\star$ by solving \eqref{eq:obj_norm}. In other words, we optimize the phase-shifters of the analog beamformer so that the composite hybrid beamformer $\mathbf{F}_{\mathrm{RF}}\mathbf{F}_{\mathrm{BB}}$ approximates the fully-digital target $\mathbf{F}_{\mathrm{tar}}$.

The optimization problem in \eqref{eq:obj_norm}, under the constraints of \eqref{eq:FRF} and \eqref{eq:X}, is nonconvex. Here we do not seek its global optimum; instead, we use a practical suboptimal solution based on gradient descent, as described below.

Using the objective in \eqref{eq:obj_norm} and the real-differential convention
$\mathrm{d}\mathcal{L}=\Re\{\mathrm{tr}(\mathbf{G}^H\mathrm{d}\mathbf{F}_{\mathrm{RF}})\}$
for
\begin{equation}
\mathcal{L}(\Phi)\triangleq-\|\mathbf{F}_{\mathrm{tar}}^{H}\mathbf{F}_{\mathrm{RF}}(\Phi)\|_F^2,
\end{equation}
the Euclidean gradient with respect to $\mathbf{F}_{\mathrm{RF}}$ is
\begin{equation}
\label{eq:G}
\mathbf{G}=-2\,\mathbf{F}_{\mathrm{tar}}\big(\mathbf{F}_{\mathrm{tar}}^{H}\mathbf{F}_{\mathrm{RF}}\big).
\end{equation}

\textbf{Forward pass:} set $\mathbf{Z}_0=\mathbf{W}\mathbf{E}_r$ and for $k=1,\ldots,M$ compute
\begin{equation}
\label{eq:fwd_rec}
\mathbf{B}_k=\mathbf{Z}_{k-1},
\qquad
\mathbf{Z}_k=\mathbf{W}\mathbf{D}_k\,\mathbf{Z}_{k-1},
\end{equation}
so that $\mathbf{F}_{\mathrm{RF}}=\mathbf{Z}_M$.

\textbf{Backward pass:} set $\mathbf{Y}_{M+1}=\mathbf{G}$ and for $k=M,\ldots,1$ compute
\begin{equation}
\label{eq:bwd_rec}
\mathbf{A}_k=\mathbf{W}^H\mathbf{Y}_{k+1},
\qquad
\mathbf{Y}_k=\mathbf{D}_k^H\,\mathbf{A}_k,
\end{equation}
and form the phase gradient
\begin{equation}
\label{eq:vec_grad}
\nabla_{\bm{\phi}_k}\mathcal{L}
=
\Im\left\{\left(\big(\mathbf{A}_k\odot \mathbf{B}_k^{*}\big)\mathbf{1}_r\right)\odot e^{-j\bm{\phi}_k}\right\}.
\end{equation}
Since $\mathbf{Z}_0=\mathbf{W}\mathbf{E}_r$ does not depend on $\Phi$, it can be precomputed once. When the matrix $\mathbf{W}$ is a DFT matrix, i.e., for a DFT mixer, each iteration uses $M$ multiplications by $\mathbf{W}$ and $M$ multiplications by $\mathbf{W}^H$, yielding $\mathcal{O}(M r N\log N)$ complexity. We use $(\cdot)^*$ for elementwise conjugation, $(\cdot)^H$ for the Hermitian transpose, and $\odot$ for the Hadamard product.

\textbf{Gauge fixing:} one global phase per layer does not affect \eqref{eq:obj_norm}; we therefore fix $\phi_{1,k}=0$ and set the corresponding gradient entry to zero.

\subsection{Programming algorithm}
Using \eqref{eq:fwd_rec}--\eqref{eq:vec_grad}, the phase-shifters can be updated by any first-order optimizer on the phase torus (e.g., gradient descent or Adam). Algorithm~\ref{alg:adjoint_prog} summarizes the required forward/backward passes and gradient computation; the specific optimizer update rule can be chosen based on implementation needs.

\begin{algorithm}[t]
\caption{Adjoint-gradient computation for interlaced mixer--phase programming (used with GD/Adam updates)}
\label{alg:adjoint_prog}
\begin{algorithmic}[1]
\STATE \textbf{Input:} Target $\mathbf{F}_{\mathrm{tar}}$, antennas $N$, RF-chains $r$, streams $S$, layers $M$, step size $\eta$.
\STATE Initialize fixed mixer $\mathbf{W}$ and phases $\Phi$ (e.g., i.i.d.\ uniform in $[0,2\pi)$); enforce gauge $\phi_{1,k}=0$.
\STATE Precompute $\mathbf{Z}_0=\mathbf{W}\mathbf{E}_r$.
\WHILE{not converged}
\STATE \textbf{Forward:} set $\mathbf{Z}_0$ to the precomputed value; for $k=1$ to $M$: store $\mathbf{B}_k=\mathbf{Z}_{k-1}$ and update $\mathbf{Z}_k=\mathbf{W}\mathbf{D}_k\mathbf{Z}_{k-1}$.
\STATE Set $\mathbf{F}_{\mathrm{RF}}=\mathbf{Z}_M$ and compute $\mathbf{T}=\mathbf{F}_{\mathrm{tar}}^H\mathbf{F}_{\mathrm{RF}}$.
\STATE \textbf{Seed gradient:} compute $\mathbf{G}=-2\,\mathbf{F}_{\mathrm{tar}}\mathbf{T}$.
\STATE \textbf{Backward:} set $\mathbf{Y}_{M+1}=\mathbf{G}$; for $k=M$ down to $1$:
\STATE \hspace{0.6em} compute $\mathbf{A}_k=\mathbf{W}^H\mathbf{Y}_{k+1}$ and $\mathbf{Y}_k=\mathbf{D}_k^H\mathbf{A}_k$;
\STATE \hspace{0.6em} compute $\nabla_{\bm{\phi}_k}\mathcal{L}$ via \eqref{eq:vec_grad} and set $[\nabla_{\bm{\phi}_k}\mathcal{L}]_1=0$.
\STATE \textbf{Update:} $\bm{\phi}_k \leftarrow \bm{\phi}_k-\eta\,\nabla_{\bm{\phi}_k}\mathcal{L}$ for all $k$.
\ENDWHILE
\STATE \textbf{Output:} The phase-settings $\Phi$ of $\mathbf{F}_{\mathrm{RF}}$.
\end{algorithmic}
\end{algorithm}

Equation~\eqref{eq:obj_norm} is nonconvex. In the numerical study, for each depth $M$ we therefore run Algorithm~\ref{alg:adjoint_prog} from a small number of independent random initializations and retain the continuous-phase candidate that maximizes the analog subspace score
$\|\mathbf{F}_{\mathrm{tar}}^H\mathbf{F}_{\mathrm{RF}}(\Phi)\|_F^2$.
Each depth is optimized independently rather than warm-started from a neighboring depth.

\subsection{MMSE specialization used in the numerical study}
\label{subsec:mmse_specialization}
The formulation in \eqref{eq:prob}--\eqref{eq:obj_norm} is target-agnostic and applies to an arbitrary matrix $\mathbf{F}_{\mathrm{tar}}$. In the numerical study, however, performance is evaluated with an MMSE digital precoder after the analog subspace has been programmed. For this reason, we now specialize the general framework to the case in which the programmed analog subspace contains the channel subspace and the digital stage is recomputed from the effective channel. The following corollary explains when this hybrid MMSE construction coincides exactly with the fully-digital MMSE precoder.

\begin{corollary}[MMSE equivalence under channel-subspace containment]
\label{cor:mmse_equiv}
Assume $\operatorname{rank}(\mathbf{H})=S\le r$ and
\begin{equation}
\operatorname{range}(\mathbf{H})\subseteq \operatorname{range}(\mathbf{F}_{\mathrm{RF}}(\Phi)).
\end{equation}
Let
\begin{equation}
\mathbf{G}\triangleq \mathbf{F}_{\mathrm{RF}}(\Phi)^H\mathbf{H}\in\mathbb{C}^{r\times S},
\end{equation}
so that $\mathbf{H}=\mathbf{F}_{\mathrm{RF}}(\Phi)\mathbf{G}$. Define the unnormalized fully-digital MMSE precoder
\begin{equation}
\widetilde{\mathbf{F}}_{\mathrm{FD}}
=
\mathbf{H}\left(\mathbf{H}^H\mathbf{H}+\alpha\mathbf{I}_S\right)^{-1},
\qquad
\alpha>0,
\end{equation}
and define the effective channel
\begin{equation}
\mathbf{H}_{\mathrm{eff}}=\mathbf{H}^H\mathbf{F}_{\mathrm{RF}}(\Phi).
\end{equation}
The unnormalized hybrid MMSE precoder is
\begin{equation}
\widetilde{\mathbf{F}}_{\mathrm{HYB}}
=
\mathbf{F}_{\mathrm{RF}}(\Phi)\,
\mathbf{H}_{\mathrm{eff}}^H
\left(
\mathbf{H}_{\mathrm{eff}}\mathbf{H}_{\mathrm{eff}}^H+\alpha\mathbf{I}_S
\right)^{-1}.
\end{equation}
Then
\begin{equation}
\widetilde{\mathbf{F}}_{\mathrm{HYB}}=\widetilde{\mathbf{F}}_{\mathrm{FD}}.
\end{equation}
After Frobenius normalization to the same injected power $P_T$, the two precoders remain identical.
\end{corollary}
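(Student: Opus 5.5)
The plan is to substitute the factorization $\mathbf{H}=\mathbf{F}_{\mathrm{RF}}(\Phi)\mathbf{G}$ into every term of $\widetilde{\mathbf{F}}_{\mathrm{HYB}}$ and reduce using semi-unitarity \eqref{eq:semi_unitary}. First I justify the factorization. By the containment assumption, $\bm{\Pi}(\Phi)=\mathbf{F}_{\mathrm{RF}}\mathbf{F}_{\mathrm{RF}}^H$ acts as the identity on $\operatorname{range}(\mathbf{H})$, exactly as in the proof of Corollary~\ref{cor:exact_recovery}. Hence $\mathbf{F}_{\mathrm{RF}}\mathbf{G}=\bm{\Pi}(\Phi)\mathbf{H}=\mathbf{H}$.

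Next I rewrite the effective channel as $\mathbf{H}_{\mathrm{eff}}=\mathbf{H}^H\mathbf{F}_{\mathrm{RF}}=\mathbf{G}^H$. Since the Gram matrices agree,
\begin{equation*}
\mathbf{H}^H\mathbf{H}=\mathbf{G}^H\mathbf{F}_{\mathrm{RF}}^H\mathbf{F}_{\mathrm{RF}}\mathbf{G}=\mathbf{G}^H\mathbf{G},
\end{equation*}
we get $\mathbf{H}_{\mathrm{eff}}\mathbf{H}_{\mathrm{eff}}^H=\mathbf{G}^H\mathbf{G}=\mathbf{H}^H\mathbf{H}$. The inverse $(\mathbf{H}^H\mathbf{H}+\alpha\mathbf{I}_S)^{-1}$ exists for $\alpha>0$ because the matrix is positive definite. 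Substituting, the hybrid precoder becomes
\begin{equation*}
\widetilde{\mathbf{F}}_{\mathrm{HYB}}=\mathbf{F}_{\mathrm{RF}}\mathbf{G}\,(\mathbf{H}^H\mathbf{H}+\alpha\mathbf{I}_S)^{-1}=\mathbf{H}(\mathbf{H}^H\mathbf{H}+\alpha\mathbf{I}_S)^{-1}=\widetilde{\mathbf{F}}_{\mathrm{FD}}.
\end{equation*}

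For the normalization claim, the two unnormalized precoders are identical matrices, so scaling by $\sqrt{P_T}/\|\cdot\|_F$ gives identical results. I would also connect this to injected power. The hybrid digital stage is $\mathbf{F}_{\mathrm{BB}}=\mathbf{G}(\mathbf{H}^H\mathbf{H}+\alpha\mathbf{I}_S)^{-1}$, and by Proposition~\ref{prop:power_preserve} its Frobenius norm equals that of the composite precoder $\mathbf{F}_{\mathrm{RF}}\mathbf{F}_{\mathrm{BB}}$. Therefore normalizing the digital stage to injected power $P_T$ is the same operation as normalizing the fully-digital precoder to $P_T$, and the two normalized precoders coincide.

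There is no real obstacle here. The only step needing care is the factorization $\mathbf{H}=\mathbf{F}_{\mathrm{RF}}\mathbf{G}$: it requires the projector argument, not merely the definition of $\mathbf{G}$. The rank assumption is not needed for the identity itself; it only ensures that the containment condition is feasible with $S\le r$.
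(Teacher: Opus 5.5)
Your proposal is correct and follows the paper's proof essentially step for step: you factor $\mathbf{H}=\mathbf{F}_{\mathrm{RF}}\mathbf{G}$ using containment and semi-unitarity, identify $\mathbf{H}_{\mathrm{eff}}=\mathbf{G}^H$ and $\mathbf{H}^H\mathbf{H}=\mathbf{G}^H\mathbf{G}$, and substitute into $\widetilde{\mathbf{F}}_{\mathrm{HYB}}$. Your added remark is a useful explicit justification of a step the paper only asserts: by Proposition~\ref{prop:power_preserve}, normalizing the digital stage to injected power $P_T$ is the same as normalizing the composite precoder to $P_T$.
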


\begin{IEEEproof}
Since $\mathbf{F}_{\mathrm{RF}}^H\mathbf{F}_{\mathrm{RF}}=\mathbf{I}_r$ and
$\operatorname{range}(\mathbf{H})\subseteq \operatorname{range}(\mathbf{F}_{\mathrm{RF}})$, we have
$\mathbf{H}=\mathbf{F}_{\mathrm{RF}}\mathbf{G}$ with
$\mathbf{G}=\mathbf{F}_{\mathrm{RF}}^H\mathbf{H}$.
Hence
\begin{equation}
\mathbf{H}_{\mathrm{eff}}
=
\mathbf{H}^H\mathbf{F}_{\mathrm{RF}}
=
\mathbf{G}^H
\end{equation}
and
\begin{equation}
\mathbf{H}^H\mathbf{H}
=
\mathbf{G}^H\mathbf{F}_{\mathrm{RF}}^H\mathbf{F}_{\mathrm{RF}}\mathbf{G}
=
\mathbf{G}^H\mathbf{G}.
\end{equation}
Therefore,
\begin{align}
\widetilde{\mathbf{F}}_{\mathrm{HYB}}
&=
\mathbf{F}_{\mathrm{RF}}\mathbf{G}
\left(\mathbf{G}^H\mathbf{G}+\alpha\mathbf{I}_S\right)^{-1}
\nonumber\\
&=
\mathbf{H}
\left(\mathbf{H}^H\mathbf{H}+\alpha\mathbf{I}_S\right)^{-1}
=
\widetilde{\mathbf{F}}_{\mathrm{FD}}.
\end{align}
The same Frobenius normalization then preserves equality.
\end{IEEEproof}

Corollary~\ref{cor:mmse_equiv} explains the numerical setup in Section~\ref{sec:numerics}: programming the analog stage against an orthonormal basis of $\operatorname{range}(\mathbf{H})$ and then recomputing the digital MMSE stage over the effective channel exactly reproduces the fully-digital MMSE precoder whenever the programmed analog subspace contains the channel subspace.

\section{Phase-Quantized Programming Under Equal Injected Power}
\label{sec:quantized}

Practical programmable phase layers provide finite resolution. Following \cite{keshavarz2025programmable}, we model each phase element as a $q$-bit variable
\begin{equation}
\label{eq:phi_quant}
\phi_{n,k} \in \mathcal{Q}_q \triangleq \left\{0,\Delta,2\Delta,\ldots,(2^q-1)\Delta\right\},
\qquad
\Delta \triangleq \frac{2\pi}{2^q}.
\end{equation}
Let $Q_q(\cdot)$ denote wrapping quantization to $\mathcal{Q}_q$ modulo $2\pi$, applied elementwise. Starting from the continuous-phase solution $\Phi^{\mathrm{cont}}$, we form
\begin{equation}
\Phi^{(q)} = Q_q(\Phi^{\mathrm{cont}})
\end{equation}
and the corresponding quantized analog beamformer
\begin{equation}
\label{eq:frf_quant}
\mathbf{F}_{\mathrm{RF}}^{(q)} = \mathbf{F}_{\mathrm{RF}}(\Phi^{(q)}).
\end{equation}

\begin{remark}[Quantization preserves RF efficiency]
Because every quantized diagonal layer $\mathbf{D}_k(\bm{\phi}_k^{(q)})$ remains unitary, the full processor $\mathbf{X}(\Phi^{(q)})$ is still unitary and
\begin{equation}
\big(\mathbf{F}_{\mathrm{RF}}^{(q)}\big)^H \mathbf{F}_{\mathrm{RF}}^{(q)} = \mathbf{I}_r.
\end{equation}
Hence, under the injected-power normalization
\begin{equation}
\label{eq:pinj_quant}
\|\mathbf{F}_{\mathrm{BB}}^{(q)}\|_F^2 = P_T,
\end{equation}
the quantized architecture remains exactly power-preserving:
\begin{equation}
\|\mathbf{F}_{\mathrm{RF}}^{(q)}\mathbf{F}_{\mathrm{BB}}^{(q)}\|_F^2 = P_T.
\end{equation}
Therefore, phase quantization affects only the reachable analog subspace and not the RF transfer efficiency.
\end{remark}

\subsection{Practical quantized programming}
For a fixed channel realization and depth $M$, we solve the continuous-phase problem \eqref{eq:obj_norm} from $N_{\mathrm{rst}}$ independent random initializations, which produces a pool of continuous-phase candidates. The continuous-phase design is the candidate in this pool with the largest subspace score
\begin{equation}
\|\mathbf{F}_{\mathrm{tar}}^H\mathbf{F}_{\mathrm{RF}}(\Phi)\|_F^2.
\end{equation}

For each quantization level $q$, every candidate in the same restart pool is then quantized elementwise,
\begin{equation}
\Phi^{(q)} = Q_q(\Phi^{\mathrm{cont}}),
\end{equation}
while preserving the per-layer gauge $\phi_{1,k}=0$, and is subsequently refined by a greedy local discrete search on the quantized phase grid. Specifically, letting
\begin{equation}
\Delta = \frac{2\pi}{2^q},
\end{equation}
each phase entry is updated over the candidate set
\begin{equation}
\{\phi,\ \phi+\Delta,\ \phi-\Delta\}\quad (\mathrm{mod}\ 2\pi)
\end{equation}
for a small number of refinement sweeps. The final $q$-bit analog beamformer is the refined quantized candidate with the largest subspace score among the same restart pool:
\begin{equation}
\mathbf{F}_{\mathrm{RF}}^{(q)}=\mathbf{F}_{\mathrm{RF}}\!\big(\Phi_{\mathrm{best}}^{(q)}\big).
\end{equation}
Thus, the best $q$-bit design is selected independently of the best continuous-phase design.

The quantized analog design itself remains target-agnostic. For any prescribed target matrix $\mathbf{F}_{\mathrm{tar}}$, the continuous-phase candidates are quantized and refined so as to maximize the same score
\begin{equation}
\|\mathbf{F}_{\mathrm{tar}}^H\mathbf{F}_{\mathrm{RF}}(\Phi)\|_F^2,
\end{equation}
which is the discrete counterpart of the generic objective in \eqref{eq:obj_norm}. Only after the quantized analog matrix has been fixed do we specialize the digital stage to the communication criterion used for evaluation.

\subsection{MMSE specialization used in the numerical study}
\label{subsec:mmse_specialization_quantized}
In the numerical study we evaluate rates with an MMSE digital precoder. Accordingly, after the analog stage is fixed, we recompute the digital MMSE stage over the effective channel
\begin{equation}\label{eq_m1}
\mathbf{H}_{\mathrm{eff}}^{(q)}=\mathbf{H}^{H}\mathbf{F}_{\mathrm{RF}}^{(q)}.
\end{equation}
Specifically, we form
\begin{equation}
\label{eq:quant_mmse}
\widetilde{\mathbf{F}}_{\mathrm{BB}}^{(q)}
=
\big(\mathbf{H}_{\mathrm{eff}}^{(q)}\big)^H
\left(
\mathbf{H}_{\mathrm{eff}}^{(q)}\big(\mathbf{H}_{\mathrm{eff}}^{(q)}\big)^H
+\frac{S\sigma^2}{P_T}\mathbf{I}_S
\right)^{-1}
\end{equation}
and scale it by a real factor $\alpha_q$ so that
\begin{equation}\label{eq_m2}
\mathbf{F}_{\mathrm{BB}}^{(q)} = \alpha_q \widetilde{\mathbf{F}}_{\mathrm{BB}}^{(q)},
\qquad
\|\mathbf{F}_{\mathrm{BB}}^{(q)}\|_F^2 = P_T.
\end{equation}
Thus, phase quantization changes only the realizable analog subspace. In the present numerical setup, the subsequent digital-stage update is MMSE-specific, whereas the quantized analog programming itself is still driven by the general target-matching objective.

\section{Stream-Aware Depth Scaling and Complexity Analysis}
\label{sec:depth}

The hybrid precoder in \eqref{eq:proj_form} depends on the analog network only through the rank-$r$ projector
$\bm{\Pi}(\Phi)$, i.e., through the realizable $r$-dimensional subspace
$\mathcal{V}_r(\Phi)=\mathrm{range}(\mathbf{F}_{\mathrm{RF}}(\Phi))$.
Therefore, the natural design space is the complex Grassmann manifold
\begin{equation}
\mathrm{Gr}(r,N)\triangleq\{\text{$r$-dimensional subspaces of }\mathbb{C}^N\},
\label{eq:grassmann_def}
\end{equation}
whose real dimension is $\dim\mathrm{Gr}(r,N)=2r(N-r)$.

\subsection{Geometry induced by $S$ streams}
Let
$\mathcal{U}_S\triangleq \mathrm{range}(\mathbf{F}_{\mathrm{tar}})\subset\mathbb{C}^N$
denote the desired $S$-dimensional signal subspace. The hybrid architecture with $r\ge S$ can represent $\mathbf{F}_{\mathrm{tar}}$ exactly whenever
\begin{equation}
\label{eq:containment}
\mathcal{U}_S \subseteq \mathcal{V}_r(\Phi),
\end{equation}
because then the projector $\bm{\Pi}(\Phi)$ acts as the identity on $\mathcal{U}_S$.

For a fixed target $\mathcal{U}_S$, the set of $r$-subspaces that contain it is
\begin{equation}
\mathcal{C}(\mathcal{U}_S)\triangleq \big\{\mathcal{V}_r\in\mathrm{Gr}(r,N)\;:\;\mathcal{U}_S\subseteq \mathcal{V}_r\big\},
\label{eq:containment_set}
\end{equation}
and satisfies
$\mathcal{C}(\mathcal{U}_S)\cong \mathrm{Gr}(r-S,N-S)$
with dimension
\begin{equation}
\dim \mathcal{C}(\mathcal{U}_S)=2(r-S)(N-r).
\label{eq:dim_containment_set}
\end{equation}
Thus, the containment constraint \eqref{eq:containment} has codimension $2S(N-r)$ inside $\mathrm{Gr}(r,N)$.

\subsection{Controllable degrees of freedom and a depth guideline}
The interlaced network uses $M$ diagonal phase layers, each with $N$ tunable phases. One global phase per layer is redundant for the induced subspace and for the objective in \eqref{eq:obj_norm}, leaving approximately
\begin{equation}
k_{\mathrm{ctrl}} \approx M(N-1)
\label{eq:kctrl}
\end{equation}
effective real control parameters.

\begin{proposition}[Stream-aware DoF guideline (dimension-counting, necessary)]
\label{prop:stream_dof}
Assume $r\ge S$ and consider the worst-case requirement that, for an arbitrary $S$-dimensional signal subspace
$\mathcal{U}_S\subset\mathbb{C}^N$, there exists a phase setting $\Phi$ such that the realizable analog subspace
$\mathcal{V}_r(\Phi)$ satisfies $\mathcal{U}_S\subseteq \mathcal{V}_r(\Phi)$.
A necessary (dimension-counting) condition for this to hold generically is
$k_{\mathrm{ctrl}} \gtrsim 2S(N-r)$, which implies
\begin{equation}
\label{eq:stream_dof_bound}
M \;\gtrsim\; \left\lceil \frac{2S(N-r)}{N-1}\right\rceil.
\end{equation}
\end{proposition}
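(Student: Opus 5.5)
We argue by dimension counting on the programmable map $\Psi:\mathbb{T}^{k_{\mathrm{ctrl}}}\to\mathrm{Gr}(r,N)$, $\Phi\mapsto\mathcal{V}_r(\Phi)=\mathrm{range}(\mathbf{X}(\Phi)\mathbf{E}_r)$, where $\mathbb{T}^{k_{\mathrm{ctrl}}}$ is the gauge-fixed phase torus of dimension $k_{\mathrm{ctrl}}\approx M(N-1)$ from \eqref{eq:kctrl}. The entries of $\mathbf{X}(\Phi)$ in \eqref{eq:X} are trigonometric polynomials in the phases, so $\Psi$ is real-analytic. Its image is therefore a semi-analytic subset of $\mathrm{Gr}(r,N)$ whose dimension is at most the generic rank of the differential $\mathrm{d}\Psi$, which in turn is at most $k_{\mathrm{ctrl}}$. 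We do not need universality results such as \cite{alvarez2025universality}, because only an upper bound on the reachable set is required.

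The requirement in the statement is that every $S$-dimensional $\mathcal{U}_S$ is contained in some reachable $\mathcal{V}_r(\Phi)$. To count this, we introduce the incidence variety
\[
\mathcal{I}\triangleq\{(\mathcal{U}_S,\mathcal{V}_r)\in\mathrm{Gr}(S,N)\times\mathrm{Gr}(r,N):\mathcal{U}_S\subseteq\mathcal{V}_r\}.
\]
It has two projections. Its fibres over $\mathcal{V}_r$ are $\mathrm{Gr}(S,r)$, of real dimension $2S(r-S)$. Its fibres over $\mathcal{U}_S$ are $\mathcal{C}(\mathcal{U}_S)\cong\mathrm{Gr}(r-S,N-S)$, of dimension $2(r-S)(N-r)$ by \eqref{eq:dim_containment_set}. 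Now let $R=\mathrm{image}(\Psi)$. Every $\mathcal{U}_S$ lies in some element of $R$ exactly when the projection of $\mathcal{I}\cap(\mathrm{Gr}(S,N)\times R)$ onto $\mathrm{Gr}(S,N)$ is surjective. A surjective real-analytic (or semi-analytic) image must have dimension at least that of the target, so we need
\[
\dim R + 2S(r-S)\;\ge\;\dim\mathrm{Gr}(S,N)=2S(N-S).
\]
This gives $\dim R\ge 2S(N-S)-2S(r-S)=2S(N-r)$, which is exactly the codimension of the containment constraint stated after \eqref{eq:dim_containment_set}.

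Combining this with $\dim R\le k_{\mathrm{ctrl}}\approx M(N-1)$ yields $M(N-1)\gtrsim 2S(N-r)$. Since $M$ is an integer, $M\ge\lceil 2S(N-r)/(N-1)\rceil$, which is \eqref{eq:stream_dof_bound}.

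The main obstacle is making the step "a surjective image forces a dimension inequality" rigorous for semi-analytic sets rather than manifolds. Our first approach is Sard's theorem: if $\dim R<2S(N-r)$, then the image of the smooth map from the (stratified) incidence set over $R$ into $\mathrm{Gr}(S,N)$ has measure zero and so cannot be all of $\mathrm{Gr}(S,N)$. This works after stratifying $R$ into finitely many analytic submanifolds, each of dimension at most the generic rank of $\mathrm{d}\Psi$. The symbols $\approx$ and $\gtrsim$ account for any further redundancies in $\Phi$ beyond the per-layer gauge, which can only lower $\dim R$ and so leave the necessary condition intact.
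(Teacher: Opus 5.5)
Your argument is correct, but it takes a different route from the paper's sketch.

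\textbf{The paper's route.} The paper fixes a single target $\mathcal{U}_S$ and notes that $\mathcal{C}(\mathcal{U}_S)$ has codimension $2S(N-r)$ in $\mathrm{Gr}(r,N)$. It then invokes a transversality heuristic: a \emph{generic} $k_{\mathrm{ctrl}}$-parameter family meets a codimension-$2S(N-r)$ set only if $k_{\mathrm{ctrl}}\ge 2S(N-r)$. Separately, it justifies $k_{\mathrm{ctrl}}=M(N-1)$ by the per-layer global-phase invariance, which is also what lets your map factor through the gauge-fixed torus. This is quick and intuitive. However, it is a statement about generic families and generic positioning, not about the specific map $\Phi\mapsto\mathcal{V}_r(\Phi)$. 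That is why the paper labels it a heuristic: a particular target, e.g.\ any $\mathcal{U}_S\subseteq\mathcal{V}_r(\Phi_0)$, is trivially contained at any depth.

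\textbf{Your route.} You use the ``arbitrary $\mathcal{U}_S$'' quantifier directly. Through the incidence correspondence, covering every target becomes surjectivity onto $\mathrm{Gr}(S,N)$, of dimension $2S(N-S)$, from a set of dimension at most $\dim R+2S(r-S)$. Sard then excludes surjectivity when $M(N-1)<2S(N-r)$. The threshold $2S(N-r)$ comes out the same; it is exactly the incidence identity linking the two fibrations. What your version buys:
\begin{itemize}
\item It is a genuine necessity proof with no transversality assumption.
\item It proves more: below the bound, the set of coverable targets has measure zero. So exact containment fails almost surely for random channel subspaces, which is the regime of the numerical study.
\end{itemize}

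\textbf{A simplification.} You can drop $R$ and the stratification entirely. Consider the map $\mathbb{T}^{M(N-1)}\times\mathrm{Gr}(S,r)\to\mathrm{Gr}(S,N)$, $(\Phi,\mathcal{W})\mapsto\mathbf{F}_{\mathrm{RF}}(\Phi)\mathcal{W}$. It is a smooth map between manifolds, since $\mathbf{F}_{\mathrm{RF}}(\Phi)$ is injective and depends smoothly on $\Phi$. It is surjective exactly when every $\mathcal{U}_S$ is coverable: take $\mathcal{W}=\mathbf{F}_{\mathrm{RF}}(\Phi)^H\mathcal{U}_S$. The elementary case of Sard (a smooth map from a lower-dimensional manifold has null image) then finishes the argument.

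\textbf{A small correction.} If you keep your version, note that images of real-analytic maps are in general subanalytic rather than semi-analytic. Subanalytic sets still admit the stratification you need, so the argument survives.
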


\begin{IEEEproof}[Proof sketch]
This is a dimension-counting heuristic under generic regularity/transversality assumptions on the map $\Phi\mapsto \mathcal{V}_r(\Phi)$; it is a necessary guideline, not a sufficiency result.
The phase parameters live on an $MN$-dimensional real torus. For any layer $k$, adding a constant offset to all $N$ phases multiplies $\mathbf{D}_k$ by a unit-modulus scalar and hence scales $\mathbf{X}(\Phi)$ and $\mathbf{F}_{\mathrm{RF}}$ by a global phase, which leaves the induced subspace and \eqref{eq:obj_norm} unchanged. Thus, at most $k_{\mathrm{ctrl}}=M(N-1)$ independent real directions can affect the objective/subspace. For a fixed $\mathcal{U}_S$, the containment set $\mathcal{C}(\mathcal{U}_S)$ has codimension $2S(N-r)$ in $\mathrm{Gr}(r,N)$ by \eqref{eq:dim_containment_set}. A generic $k_{\mathrm{ctrl}}$-parameter family can intersect an arbitrary codimension-$2S(N-r)$ subset only if $k_{\mathrm{ctrl}}\ge 2S(N-r)$, yielding \eqref{eq:stream_dof_bound}.
\end{IEEEproof}

\begin{remark}[Interpretation and limitations]
\label{rem:stream_dof}
The guideline \eqref{eq:stream_dof_bound} links network depth primarily to $(S,r)$ rather than to $N$.
(i) If $r=S$, then
$M\gtrsim \left\lceil\frac{2S(N-S)}{N-1}\right\rceil\approx 2S$
for $N\gg S$.
(ii) If $S<r<N$, the required $M$ decreases with $r$, reflecting increased slack in choosing an $r$-subspace that contains a fixed $S$-subspace.
(iii) The bound is necessary but not sufficient: coverage also depends on the differential properties of the map
$\Phi\mapsto \mathcal{V}_r(\Phi)$
induced by \eqref{eq:X}. Finally, the reachable subspace dimension cannot exceed
$\min\{M(N-1),2r(N-r)\}$, so once
$M(N-1)\gtrsim 2r(N-r)$,
additional layers can improve conditioning and optimization landscape but not the local dimension of the reachable set. Moreover, 
Proposition~\ref{prop:stream_dof} provides a dimension-counting necessary guideline for generic subspace containment. It does not, by itself, prove that a given depth $M$ is sufficient for exact hybrid recovery. Exact recovery follows only when the realized analog subspace satisfies
\[
\mathrm{range}(\mathbf{F}_{\mathrm{tar}})\subseteq \mathrm{range}(\mathbf{F}_{\mathrm{RF}}(\Phi)),
\]
as stated in Corollary~\ref{cor:exact_recovery}. Accordingly, in the numerical section we interpret $M\approx 2S$ as a practical design guideline rather than as a formal sufficiency threshold.
\end{remark}

\section{Comparator Baselines Under Equal Injected Power}
\label{sec:baselines}

This section specifies the three comparator architectures used in the numerical study. The comparison is intentionally performed at equal \emph{injected} power $P_T$, not at equal radiated power. For every hybrid architecture,
\begin{equation}
P_T=\|\mathbf{F}_{\mathrm{BB}}\|_F^2
\end{equation}
is fixed, while the radiated power is
\begin{equation}
P_{\mathrm{rad}}=\|\mathbf{F}_{\mathrm{RF}}\mathbf{F}_{\mathrm{BB}}\|_F^2
=\eta_{\mathrm{RF}}P_T,
\end{equation}
where
\begin{equation}
\eta_{\mathrm{RF}}\triangleq
\frac{\|\mathbf{F}_{\mathrm{RF}}\mathbf{F}_{\mathrm{BB}}\|_F^2}
{\|\mathbf{F}_{\mathrm{BB}}\|_F^2}
\end{equation}
is the RF transfer efficiency. For the proposed unitary RF-network architecture, Proposition~\ref{prop:power_preserve} gives $\eta_{\mathrm{RF}}=1$. For the fully-connected hybrid beamforming architectures in \cite{sohrabi2016hybrid} and \cite{yu2019dps}, which we use as baseline, the transfer efficiency is $0<\eta_{\mathrm{RF}}<1$. For the ideal-lossless Butler/DFT beam-selection baseline of Subsection~\ref{subsec:butler_baseline}, the RF stage is a selected submatrix of a unitary DFT transform and therefore also satisfies $\eta_{\mathrm{RF}}=1$.

\subsection{Sohrabi--Yu fully-connected baseline}
\label{subsec:fc1_baseline}
The Sohrabi--Yu fully-connected hybrid beamforming   architecture (FC1)  in \cite{sohrabi2016hybrid}
uses one phase-shifter per RF-chain to antenna connection.   It is chosen as a baseline since this architecture is the most widely adopted and studied hybrid beamforming architecture.

FC1 in \cite{sohrabi2016hybrid} uses an abstract fully-connected analog matrix
$\widetilde{\mathbf{F}}_{\mathrm{RF}}^{\mathrm{FC1}}\in\mathbb{C}^{N\times r_{\mathrm{FC1}}}$
with unit-modulus entries.
 In the exact-representation regime of Proposition~2 in \cite{sohrabi2016hybrid}, one uses $r_{\mathrm{FC1}}=2S$ RF-chains and continuous phases to factor a target matrix $\mathbf{F}_{\mathrm{tar}}$ as
\begin{equation}
\mathbf{F}_{\mathrm{tar}}
=
\widetilde{\mathbf{F}}_{\mathrm{RF}}^{\mathrm{FC1}}
\widetilde{\mathbf{F}}_{\mathrm{BB}}^{\mathrm{FC1}}
\end{equation}
under the \emph{abstract} model.

To compare on a physical footing, we realize the FC1 network as an ideal passive splitter--phase-shifter--combiner network. The resulting transfer from the RF-chain outputs to the antenna ports is
\begin{equation}
\label{eq:fc1_phys}
\mathbf{F}_{\mathrm{RF}}^{\mathrm{FC1}}
=
\frac{1}{\sqrt{N r_{\mathrm{FC1}}}}\,
\widetilde{\mathbf{F}}_{\mathrm{RF}}^{\mathrm{FC1}}.
\end{equation}
The factor $1/\sqrt{N}$ accounts for equal splitting of each RF-chain signal into $N$ branches, while $1/\sqrt{r_{\mathrm{FC1}}}$ accounts for ideal equal-power combining of $r_{\mathrm{FC1}}$ branches at each antenna port. Hence $\|\mathbf{F}_{\mathrm{RF}}^{\mathrm{FC1}}\|_2\le 1$, and therefore the physical FC1 transfer is contractive.

In the numerical evaluation, $\mathbf{F}_{\mathrm{RF}}^{\mathrm{FC1}}$ is first constructed from $\mathbf{F}_{\mathrm{tar}}$ using the continuous-phase exact-representation rule of \cite{sohrabi2016hybrid} and then interpreted through the passive transfer \eqref{eq:fc1_phys}. Once the physical analog matrix is fixed, the digital stage is recomputed over the effective channel
\begin{equation}
\mathbf{H}_{\mathrm{eff}}^{\mathrm{FC1}}=\mathbf{H}^H\mathbf{F}_{\mathrm{RF}}^{\mathrm{FC1}}
\end{equation}
using the same MMSE rule as for the proposed architecture, followed by scaling to satisfy
\begin{equation}
\|\mathbf{F}_{\mathrm{BB}}^{\mathrm{FC1}}\|_F^2=P_T.
\end{equation}
Hence the FC1 baseline is evaluated under the same injected-power normalization as the proposed architecture, while its radiated power is reduced by the contractive physical RF transfer.

\subsection{Yu--Zhang--Letaief fully-connected  baseline}
\label{subsec:fc2_baseline}
The Yu--Zhang--Letaief fully-connected two-phase-shifter (FC2) architecture in \cite{yu2019dps} uses two phase-shifters per RF-chain-to-antenna connection. It is chosen as a baseline because, under equal-radiated-power normalization and in the narrowband case with $r=S$, it can match the performance of the fully-digital architecture. As the numerical results in this paper show, however, its performance is much weaker under equal injected power because the physical RF transfer is contractive.

The abstract  RF matrix of FC2 is given by
\begin{equation}
\widetilde{\mathbf{F}}_{\mathrm{RF}}^{\mathrm{FC2}}
=
\mathbf{\Phi}_1+\mathbf{\Phi}_2,
\end{equation}
where $\mathbf{\Phi}_1,\mathbf{\Phi}_2\in\mathbb{C}^{N\times r}$ have unit-modulus entries. 

For the physical passive implementation, each RF-chain signal is split into $2N$ branches and each antenna port combines $2r$ branches. The corresponding transfer matrix is
\begin{equation}
\label{eq:fc2_phys}
\mathbf{F}_{\mathrm{RF}}^{\mathrm{FC2}}
=
\frac{1}{2\sqrt{Nr}}
\left(\mathbf{\Phi}_1+\mathbf{\Phi}_2\right),
\end{equation}
which is again contractive.

In our physically fair baseline, let the target matrix have compact SVD
\begin{equation}
\mathbf{F}_{\mathrm{tar}}=\mathbf{U}\mathbf{\Sigma}\mathbf{V}^H.
\end{equation}
We construct the largest columnwise-scaled passive FC2 analog stage
\begin{equation}
\mathbf{F}_{\mathrm{RF}}^{\mathrm{FC2}}=\mathbf{U}\mathbf{D},
\end{equation}
where
\begin{equation}
\label{eq:d_diag}
\mathbf{D}=\mathrm{diag}(d_1,\ldots,d_r),
\qquad
d_j=\frac{1}{\sqrt{Nr}\,\max_i |U_{ij}|},
\end{equation}
so that every entry of $\mathbf{U}\mathbf{D}$ is exactly realizable by the passive FC2 network in \eqref{eq:fc2_phys}.

In the numerical evaluation, the physically realizable analog matrix $\mathbf{F}_{\mathrm{RF}}^{\mathrm{FC2}}=\mathbf{U}\mathbf{D}$ is fixed first, and the digital stage is then recomputed over the corresponding effective channel
\begin{equation}
\mathbf{H}_{\mathrm{eff}}^{\mathrm{FC2}}=\mathbf{H}^H\mathbf{F}_{\mathrm{RF}}^{\mathrm{FC2}}
\end{equation}
using the same MMSE rule as for the proposed architecture, followed by scaling to satisfy
\begin{equation}
\|\mathbf{F}_{\mathrm{BB}}^{\mathrm{FC2}}\|_F^2=P_T.
\end{equation}
Thus, the FC2 baseline preserves the target subspace through its analog design, but its end-to-end rate under equal injected power is evaluated through the actual effective channel induced by the contractive physical RF transfer.

\begin{remark}[RF-chain count and comparison axes]
Some architectures (e.g., FC1 in \cite{sohrabi2016hybrid}) require a larger number of RF-chains (here $r_{\mathrm{FC1}}=2S$) to guarantee exact factorization under the abstract constant-modulus model.
In this paper we compare architectures under a fixed \emph{total injected RF power} budget $P_T$, i.e., the total power delivered by all RF-chains after their PAs is held constant.
Thus, when an architecture uses more RF-chains, the available injected power is distributed across more chains, and the architecture also implicitly incurs additional RF-chain hardware (PAs, mixers, DACs, routing).  
Accordingly, the present comparison should be interpreted as equal total post-PA injected power and equal propagation conditions, but not as equal RF-chain count or equal hardware complexity.
\end{remark}

\subsection{Ideal-lossless Garcia-inspired Butler/DFT beam-selection baseline}
\label{subsec:butler_baseline}

Inspired by the DFT/Butler ABFN model of \cite{garcia2016rf}, we consider its ideal-lossless limit, in which the RF stage is restricted to beam selection within a fixed Fourier basis. Unlike the full realistic model in \cite{garcia2016rf}, which includes multiplicative loss factors from the hybrid couplers and phase-shifters, the present baseline deliberately omits static insertion loss in order to isolate fixed-basis, non-contractive RF transport from analog-subspace programmability. 

Let
\begin{equation}
\mathbf{U}_{\mathrm{DFT}}\in\mathbb{C}^{N\times N}
\end{equation}
denote the unitary DFT matrix and let
\begin{equation}
\mathbf{F}_{\mathrm{RF}}^{\mathrm{BM}}
=
\mathbf{U}_{\mathrm{DFT}}\mathbf{E}_{\mathcal{B}}
\in\mathbb{C}^{N\times r_{\mathrm{BM}}},
\end{equation}
where $\mathbf{E}_{\mathcal{B}}$ selects the columns indexed by
$\mathcal{B}\subseteq\{1,\ldots,N\}$ with $|\mathcal{B}|=r_{\mathrm{BM}}$.
Because $\mathbf{U}_{\mathrm{DFT}}$ is unitary, the selected-beam analog matrix is semi-unitary:
\begin{equation}
\left(\mathbf{F}_{\mathrm{RF}}^{\mathrm{BM}}\right)^H
\mathbf{F}_{\mathrm{RF}}^{\mathrm{BM}}
=
\mathbf{I}_{r_{\mathrm{BM}}},
\end{equation}
and therefore preserves injected power under the same ideal matched model used for the proposed architecture.

To make the baseline as competitive as possible within its fixed Fourier basis, we choose the beam set $\mathcal{B}$ adaptively for each channel realization using the same target subspace $\mathbf{F}_{\mathrm{tar}}$. Let $\mathbf{u}_n$ denote the $n$-th column of $\mathbf{U}_{\mathrm{DFT}}$ and define
\begin{equation}
c_n \triangleq \|\mathbf{F}_{\mathrm{tar}}^H \mathbf{u}_n\|_2^2,
\qquad n=1,\ldots,N.
\end{equation}
We then select the $r_{\mathrm{BM}}$ indices with the largest values of $c_n$. Since the DFT columns are orthonormal, this rule maximizes
\begin{equation}
\|\mathbf{F}_{\mathrm{tar}}^H\mathbf{F}_{\mathrm{RF}}^{\mathrm{BM}}\|_F^2
\end{equation}
over all size-$r_{\mathrm{BM}}$ subsets of DFT beams.

Note that in this architecture, $\mathbf{U}_{\mathrm{DFT}}$ is fixed; only the selection matrix $\mathbf{E}_{\mathcal B}$ is allowed to depend on the channel realization. Thus, this baseline models an ideal adaptive beam selector operating on top of a fixed Butler/DFT codebook. It is only achievable if switching between ports can track the channel realization itself. By contrast, the beam-selection mechanism in \cite{garcia2016rf} adapts on the much slower covariance timescale. Hence, the baseline considered here is deliberately optimistic and can be interpreted as an upper bound on the covariance-based DFT beam allocation of \cite{garcia2016rf}.

This target-matched beam-selection rule is used here as a favorable fixed-basis reference; it is not intended as an exact reproduction of the covariance-based beam-selection rule used in \cite{garcia2016rf}.

After fixing $\mathbf{F}_{\mathrm{RF}}^{\mathrm{BM}}$, we recompute the digital MMSE stage over the effective channel
\begin{equation}
\mathbf{H}_{\mathrm{eff}}^{\mathrm{BM}}=\mathbf{H}^H\mathbf{F}_{\mathrm{RF}}^{\mathrm{BM}}
\end{equation}
using the same rule as for the proposed architecture, and scale it so that
\begin{equation}
\|\mathbf{F}_{\mathrm{BB}}^{\mathrm{BM}}\|_F^2=P_T.
\end{equation}
Hence the Butler/DFT baseline differs from the proposed architecture not in RF transfer efficiency, but in analog expressivity: it is limited to beam selection within a fixed Fourier basis.

\section{Numerical Results}
\label{sec:numerics}

We evaluate the proposed hybrid beamforming architecture in a \emph{narrowband} multiuser downlink and compare it against (i) a fully-digital benchmark, (ii) FC1, the physically modeled Sohrabi--Yu fully-connected hybrid beamforming architecture proposed in \cite{sohrabi2016hybrid} and explained in Subsection~\ref{subsec:fc1_baseline}, (iii)  FC2, the physically modeled Yu--Zhang--Letaief fully-connected hybrid beamforming architecture proposed in \cite{yu2019dps} and explained in  Subsection~\ref{subsec:fc2_baseline}, and (iv) the ideal-lossless Garcia--Rodriguez \emph{et al.} hybrid beamforming architecture with Butler/DFT beam-selection proposed in  \cite{garcia2016rf} and explained in Subsection~\ref{subsec:butler_baseline}. 

For the proposed architecture, we report results for both continuous phase-shifters  and quantized phase-shifters with $q\in\{2,4,6\}$ bits. The fully-connected phase-shifter baselines are shown only with continuous phase-shifters, while the Butler/DFT baseline uses adaptive beam selection within a fixed DFT basis.

\subsection{System setup}
We consider a uniform linear array (ULA) with $N=512$ antennas at $f_c=100$~GHz and half-wavelength spacing $d=\lambda_c/2$. The proposed architecture uses $r=S=16$ RF-chains/streams to serve $S=16$ single-antenna users. The FC2 baseline also uses $r=S=16$ RF-chains, while the FC1 baseline is evaluated in its exact-representation regime with $r_{\mathrm{FC1}}=2S=32$ RF-chains, as prescribed by \cite{sohrabi2016hybrid}. The   Butler/DFT baseline is evaluated with $r_{\mathrm{BM}}=r=S=16$ selected DFT beams, i.e., the same number of RF-chains/beams as the proposed architecture. Guided by Proposition~\ref{prop:stream_dof}, we evaluate the depth sweep
\begin{equation}
M\in\{16,32,48,64\}=\{S,2S,3S,4S\},
\end{equation}
and use $M=32$ for the injected-power sweep of Fig.~\ref{fig:pinj_nb}. The fixed mixing layer $\mathbf{W}$ is the unitary DFT (Butler) transform and is held fixed during optimization. All results are averaged over $N_{\mathrm{MC}}=500$ independent user locations and corresponding channel realizations.

We model the receiver-noise power from the thermal-noise density $-174$~dBm/Hz over a narrowband bandwidth $B=200$~kHz, with no additional receiver noise figure. Thus,
\begin{equation}
\sigma^2_{\mathrm{dBm}}
=
-174 + 10\log_{10}(B/\mathrm{Hz})
=
-120.99~\mathrm{dBm},
\end{equation}
which corresponds to
\begin{equation}
\sigma^2
=
10^{(\sigma^2_{\mathrm{dBm}}-30)/10}
=
7.96\times 10^{-16}\ \mathrm{W}.
\end{equation}
We evaluate Fig.~\ref{fig:depth_nb} at
\begin{equation}
P_T=0~\mathrm{dBm},
\end{equation}
and for Fig.~\ref{fig:pinj_nb} we sweep
\begin{equation}
P_T\in\{-20,-15,\ldots,50\}\ \mathrm{dBm}.
\end{equation}

User $k$ has distance $\rho_k\sim\mathcal{U}[50,1000]$~m and angle $\theta_k\sim\mathcal{U}[-\pi/3,\pi/3]$, where the angle $\theta_k$ is measured with respect to the array axis. 
For the present array size and carrier frequency, the Fraunhofer distance is approximately $400$~m; hence the interval $[50,1000]$~m spans both Fresnel and Fraunhofer operating regions. We therefore refer to the model generically as a spherical-wave channel rather than a purely near-field channel.

\paragraph{Spherical-wave geometric channel}
Let antenna $n$ be located at
\begin{equation}
x_n=\left(n-1-\frac{N-1}{2}\right)d.
\end{equation}

For a path specified by $(\rho,\theta)$, the spherical-wave distance from antenna $n$ is
\begin{equation}
R_n(\rho,\theta)=\sqrt{\rho^2+x_n^2-2\rho x_n\cos\theta}.
\end{equation}
The channel to user $k$ consists of one LOS path plus $L=4$ NLOS paths:
\begin{align}
[\mathbf{h}_k]_n
&=
\frac{\lambda_c}{4\pi \rho_k}
e^{-j\frac{2\pi}{\lambda_c}R_n(\rho_k,\theta_k)}
\nonumber\\
&\quad
+\sum_{p=1}^{L}\frac{\lambda_c}{4\pi \rho_{k,p}}\alpha_{k,p}
e^{-j\frac{2\pi}{\lambda_c}R_n(\rho_{k,p},\theta_{k,p})},
\label{eq:sim_channel}
\end{align}
where $(\rho_{k,p},\theta_{k,p})$ are i.i.d.\ draws from the same ranges as $(\rho_k,\theta_k)$ and
$\alpha_{k,p}=10^{-15/20}e^{j\varphi_{k,p}}$ with $\varphi_{k,p}\sim\mathcal{U}[0,2\pi)$. The angle $\theta$ is measured with respect to the array axis, which is why the spherical-wave distance uses $\cos\theta$. Note that $\rho_k$ and $\rho_{k,p}$ affect the channel through both spherical-wave phase curvature and the distance-dependent link budget. 

Finally, we form the channel
$\mathbf{H}=[\mathbf{h}_1,\dots,\mathbf{h}_S]\in\mathbb{C}^{N\times S}$, where the elements of $\mathbf{h}_k$ are given by \eqref{eq:sim_channel}.

\paragraph{Target for analog programming}
Following the subspace-synthesis viewpoint, we set the programming target $\mathbf{F}_{\mathrm{tar}}$ to an orthonormal basis of the dominant $S$-dimensional channel subspace. If
\begin{equation}
\mathbf{H}=\mathbf{U}\mathbf{\Sigma}\mathbf{V}^H
\end{equation}
is the SVD of the narrowband channel matrix, then
\begin{equation}
\mathbf{F}_{\mathrm{tar}}=\mathbf{U}(:,1\!:\!S).
\end{equation}
We choose $\mathbf{F}_{\mathrm{tar}}=\mathbf{U}(:,1\!:\!S)$ rather than the fully-digital MMSE precoder itself because the role of the analog stage is to synthesize the dominant transmit subspace, while stream weighting and interference regularization are handled by the digital MMSE stage computed from the effective channel.

The proposed architecture programs $\mathbf{F}_{\mathrm{RF}}(\Phi)$ by solving \eqref{eq:obj_norm} with Algorithm~\ref{alg:adjoint_prog}. The fully-connected phase-shifter baselines use the physical models in Section~\ref{sec:baselines}, with continuous phases and equal injected power. For the Butler/DFT baseline, the same target $\mathbf{F}_{\mathrm{tar}}$ is used to select the $r_{\mathrm{BM}}$ DFT beams with the largest scores $c_n=\|\mathbf{F}_{\mathrm{tar}}^H\mathbf{u}_n\|_2^2$, as described in Subsection~\ref{subsec:butler_baseline}.

\paragraph{Digital stage, injected power, and rate metric.}
For the proposed architecture, once $\mathbf{F}_{\mathrm{RF}}$ is programmed, we form the effective channel
\begin{equation}
\mathbf{H}_{\mathrm{eff}}=\mathbf{H}^{H}\mathbf{F}_{\mathrm{RF}}
\end{equation}
and compute the digital MMSE precoder
\begin{equation}
\label{eq:sim_mmse_bb}
\mathbf{F}_{\mathrm{BB}}
=
\mathbf{H}_{\mathrm{eff}}^H
\big(\mathbf{H}_{\mathrm{eff}}\mathbf{H}_{\mathrm{eff}}^H+\alpha\mathbf{I}_S\big)^{-1},
\end{equation}
where
\begin{equation}
\alpha=\frac{S\sigma^2}{P_T}.
\end{equation}
The digital matrix is then scaled so that
\begin{equation}
\|\mathbf{F}_{\mathrm{BB}}\|_F^2=P_T.
\end{equation}
Because the proposed analog stage is semi-unitary, this also implies
$P_{\mathrm{rad}}=P_T$.

For the fully-connected phase-shifter baselines, FC1 and FC2, the digital stage is recomputed over their corresponding effective channels and then scaled to satisfy the same injected-power constraint
\begin{equation}
\|\mathbf{F}_{\mathrm{BB}}^{\mathrm{FC1}}\|_F^2=P_T,
\qquad
\|\mathbf{F}_{\mathrm{BB}}^{\mathrm{FC2}}\|_F^2=P_T.
\end{equation}
Their radiated powers are
\begin{equation}
P_{\mathrm{rad}}^{\mathrm{FC1}}
=
\|\mathbf{F}_{\mathrm{RF}}^{\mathrm{FC1}}\mathbf{F}_{\mathrm{BB}}^{\mathrm{FC1}}\|_F^2
=
\eta_{\mathrm{FC1}} P_T
\end{equation}
and
\begin{equation}
P_{\mathrm{rad}}^{\mathrm{FC2}}
=
\|\mathbf{F}_{\mathrm{RF}}^{\mathrm{FC2}}\mathbf{F}_{\mathrm{BB}}^{\mathrm{FC2}}\|_F^2
=
\eta_{\mathrm{FC2}} P_T
\end{equation}
with $\eta_{\mathrm{FC1}},\eta_{\mathrm{FC2}}<1$ in general.

For the Butler/DFT baseline, the digital stage is recomputed over
\begin{equation}
\mathbf{H}_{\mathrm{eff}}^{\mathrm{BM}}=\mathbf{H}^H\mathbf{F}_{\mathrm{RF}}^{\mathrm{BM}}
\end{equation}
and scaled to satisfy
\begin{equation}
\|\mathbf{F}_{\mathrm{BB}}^{\mathrm{BM}}\|_F^2=P_T.
\end{equation}
Because $\mathbf{F}_{\mathrm{RF}}^{\mathrm{BM}}$ is semi-unitary in the ideal-lossless model, the Butler baseline also satisfies
\begin{equation}
P_{\mathrm{rad}}^{\mathrm{BM}}=P_T.
\end{equation}

The fully-digital benchmark is the unconstrained narrowband MMSE precoder computed directly from $\mathbf{H}$ and scaled to satisfy
\begin{equation}
\|\mathbf{F}_{\mathrm{FD}}\|_F^2=P_T.
\end{equation}
Hence all curves are compared at the same injected power $P_T$.

Treating residual multiuser interference as noise, the SINR of user $s$ is
\begin{equation}
\label{eq:sim_sinr}
\mathrm{SINR}_s
=
\frac{\big|\mathbf{h}_s^H\mathbf{f}_s\big|^2}
{\sum_{j\neq s}\big|\mathbf{h}_s^H\mathbf{f}_j\big|^2+\sigma^2},
\end{equation}
and we report the narrowband sum-rate
\begin{equation}
\label{eq:sumrate}
R_{\Sigma}
=
\sum_{s=1}^{S}\log_2\!\big(1+\mathrm{SINR}_s\big).
\end{equation}

\paragraph{Optimization details.}
For the proposed interlaced network, continuous-phase programming uses Adam with $(\beta_1,\beta_2,\epsilon)=(0.9,0.999,10^{-8})$, learning rate $0.02$, and $500$ iterations. For each channel realization and each depth $M$, we generate a pool of $N_{\mathrm{rst}}=2$ independent continuous-phase candidates. The continuous-phase curve uses the candidate with the largest subspace score $\|\mathbf{F}_{\mathrm{tar}}^H\mathbf{F}_{\mathrm{RF}}\|_F^2$.

For $q$-bit control with $q\in\{2,4,6\}$, every candidate in the same restart pool is first quantized elementwise to $\mathcal{Q}_q$ and then refined by a greedy local discrete search with $N_{\mathrm{sw}}=12$ sweeps and candidate set $\{\phi,\phi\pm\Delta\}$, where $\Delta=2\pi/2^q$, while enforcing the per-layer gauge $\phi_{1,k}=0$. For each value of $q$, the final analog matrix is the refined $q$-bit candidate with the largest subspace score among the same restart pool; thus, each $q$-bit curve keeps its own best restart and is selected independently of the best continuous-phase design. The digital MMSE stage is then recomputed over the corresponding effective channel and scaled so that $\|\mathbf{F}_{\mathrm{BB}}\|_F^2=P_T$.

For Fig.~\ref{fig:depth_nb}, each depth $M$ is optimized independently. The FC1 and FC2 baselines are shown with continuous phases only, while the Butler/DFT baseline uses adaptive beam selection only; once their analog matrices are fixed, their digital MMSE stages are recomputed over the corresponding effective channels and scaled to the same injected power.

\subsection{Impact of depth: Figure~\ref{fig:depth_nb}}
\label{subsec-fig_2}
Guided by the stream-aware rule $M\approx 2S$, we evaluate the narrowband sum-rate at injected power $P_T=0$~dBm for
\begin{equation}
M\in\{16,32,48,64\}.
\end{equation}
Figure~\ref{fig:depth_nb} compares the fully-digital MMSE benchmark, the proposed architecture with continuous phases, the proposed architecture with $2$-, $4$-, and $6$-bit phase quantization, the continuous-phase FC1 and FC2 baselines, and the Butler/DFT baseline.  

Figure~\ref{fig:depth_nb} shows that the proposed architecture benefits primarily from increasing the circuit depth up to about $M\approx 2S$, after which the returns diminish markedly. The continuous-phase and $6$-bit realizations approach the fully-digital benchmark and then saturate, which is consistent with the stream-aware guideline of Proposition~\ref{prop:stream_dof}. The $4$-bit realization follows the same qualitative behavior but with a persistent modest gap, whereas the $2$-bit realization continues to improve with depth yet remains visibly below the higher-resolution cases.

The fully-digital benchmark and all three comparator baselines are horizontal because they do not depend on $M$. Among these depth-independent references, the ideal-lossless Butler/DFT beam-selection baseline is the strongest, but it remains clearly below the continuous-phase, $6$-bit, and $4$-bit realizations of the proposed architecture. The FC1 and FC2 baselines are substantially weaker under equal injected power, which is consistent with their contractive passive RF transport. Overall, Fig.~\ref{fig:depth_nb} indicates that, for the present operating point, a depth around $M\approx 2S$ captures most of the attainable gain, while additional layers mainly provide marginal refinement. The gaps among the proposed continuous and quantized curves are due to analog-subspace mismatch rather than RF-transfer loss, because all proposed variants remain semi-unitary after phase quantization.

\begin{figure}[t]
\centering
\includegraphics[width=\columnwidth]{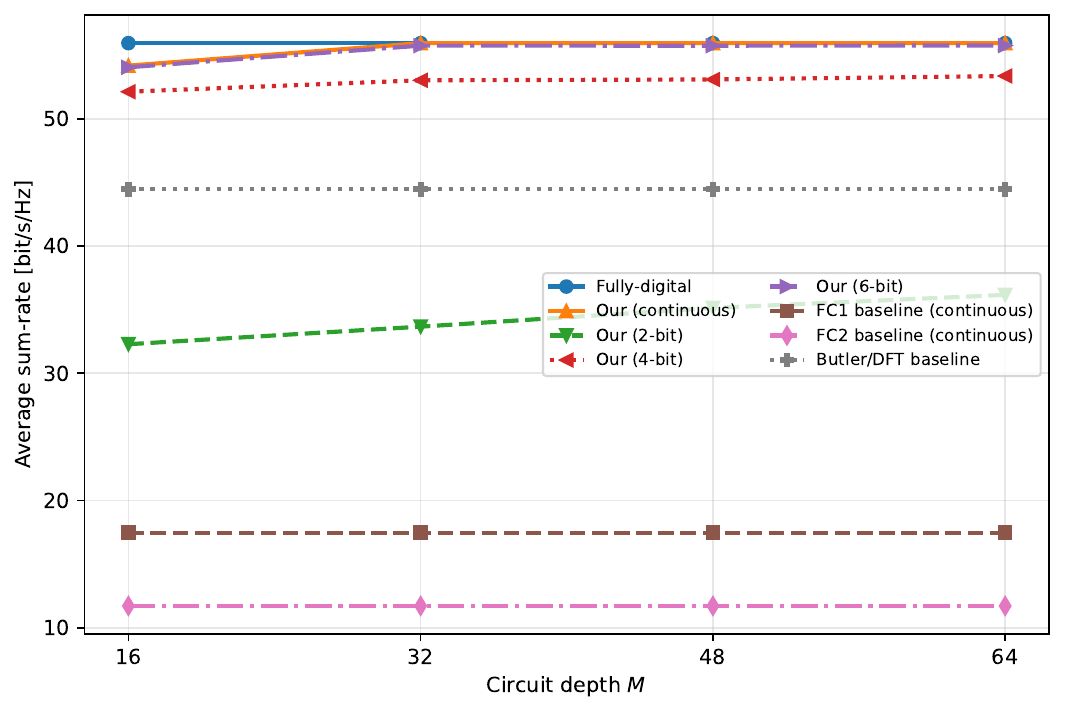}
\caption{Narrowband average sum-rate versus depth $M$ at injected power $P_T=0$~dBm for $N=512$ and $S=r=16$. The proposed architecture is shown for continuous phases and for $2$-, $4$-, and $6$-bit phase quantization followed by greedy local discrete refinement; the FC1 and FC2 baselines are shown with continuous phases only, and the Butler/DFT baseline uses adaptive beam selection within a fixed Fourier basis. The proposed continuous-phase and $6$-bit curves saturate near $M\approx 2S=32$.}
\label{fig:depth_nb}
\end{figure}

\subsection{Sum-rate versus injected power}
\label{subsec-fig_3}
Motivated by the saturation around $M=32$ in Fig.~\ref{fig:depth_nb}, we fix $M=32$ and sweep the injected power $P_T$ from $-20$~dBm to $50$~dBm in $5$~dB steps. Figure~\ref{fig:pinj_nb} compares the fully-digital MMSE benchmark, the proposed architecture with continuous phases, the proposed architecture with $2$-, $4$-, and $6$-bit phase quantization, the continuous-phase FC1 and FC2 baselines, and the Butler/DFT baseline.

Figure~\ref{fig:pinj_nb} shows that the proposed architecture with continuous phases remains essentially indistinguishable from the fully-digital benchmark across the entire injected-power range. The $6$-bit realization follows it almost exactly, while the $4$-bit realization stays close with only a small persistent loss. The $2$-bit realization exhibits a more visible gap, but it still improves monotonically with injected power and preserves the same qualitative scaling trend. 

Figure~\ref{fig:pinj_nb} also shows that the proposed hybrid beamforming architecture achieves significant performance gains over the FC1, FC2, and Butler/DFT baselines. Compared with FC1 and FC2, the gains of the proposed architecture arise primarily from the fact that its RF transport remains semi-unitary under equal injected power, whereas the passive fully-connected baselines are physically contractive. Compared with the Butler/DFT baseline, the gain arises from greater analog-subspace programmability rather than from any proved universal approximation property. Accordingly, the main message of Fig.~\ref{fig:pinj_nb} is empirical: in the present setup, the proposed continuous-phase and $6$-bit realizations nearly reproduce the fully-digital benchmark, while the remaining gap of the lower-resolution designs is governed by analog-subspace mismatch.

\begin{figure}[t]
\centering
\includegraphics[width=\columnwidth]{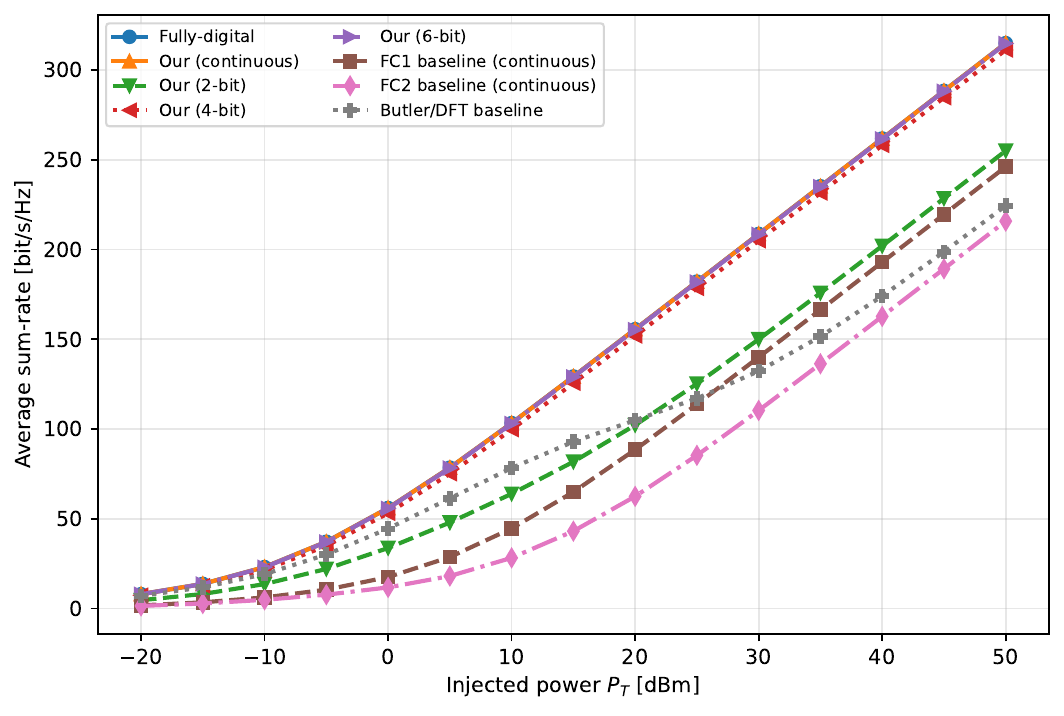}
\caption{Narrowband average sum-rate versus injected power $P_T$ (in dBm) for $M=32$, $N=512$, and $S=r=16$. The proposed architecture is shown for continuous phases and for $2$-, $4$-, and $6$-bit phase quantization followed by greedy local discrete refinement; the FC1 and FC2 baselines are shown with continuous phases only, and the Butler/DFT baseline uses adaptive beam selection within a fixed Fourier basis. In this $S=16$ setting, the continuous-phase and $6$-bit realizations of the proposed architecture are nearly indistinguishable from the fully-digital benchmark across the entire injected-power range.}
\label{fig:pinj_nb}
\end{figure}

\section{Conclusion}
\label{sec:conclusion}
This paper proposed a hybrid beamforming architecture based on a programmable unitary RF network and studied its performance   from an equal-injected-power viewpoint in the narrowband setting. Because the induced analog beamformer is semi-unitary, the architecture preserves the injected RF-chain power exactly at the antenna ports under the ideal matched hardware model. Importantly, this power-preserving property survives phase quantization: quantizing the programmable phase layers reduces the reachable analog subspace, but it does not change the semi-unitarity of the analog processor and therefore does not alter the equal-injected-power normalization.
This property sharply distinguishes the architecture from passive fully-connected hybrid beamforming architectures whose RF transfer is contractive.

We derive a closed-form digital beamformer and a low-complexity programming method for the analog beamformer, which together produce a hybrid precoder that closely matches the fully-digital precoder. We also derive a stream-aware depth guideline for the interlaced mixer--phase architecture.

The numerical results for the spherical-wave downlink show that the proposed hybrid beamforming architecture can closely reproduce the performance of the fully-digital architecture while providing significant performance gains over the baseline architectures under equal injected power.

\bibliographystyle{IEEEtran}
\bibliography{refs}

\end{document}